\def\w{{\bf w}}
\def\y{{\bf y}}
\def\x{{\bf x}}
\def\x{{\mathbf x}}
\def\w{{\bf w}}
\def\x{{\bf x}}
\def\y{{\bf y}}
\def\h{{\bf h}}
\def\be{\begin{equation}}
\def\ee{\end{equation}}
\def\ba{\left[\begin{array}}
\def\ea{\end{array}\right]}
\def\w{{\bf w}}
\def\x{{\bf x}}
\def\y{{\bf y}}
\def\1{{\bf 1}}
\def\g{{\bf g}}
\def\0{{\bf 0}}
\def\erfinv{\mbox{erfinv}}
\def\htheta{\hat{\theta}}
\newtheorem{theorem}{Theorem}
\newtheorem{lemma}{Lemma}
\begin{document}

\begin{singlespace}

\title {Meshes that trap random subspaces 
}
\author{
\textsc{Mihailo Stojnic}
\\
\\
{School of Industrial Engineering}\\
{Purdue University, West Lafayette, IN 47907} \\
{e-mail: {\tt mstojnic@purdue.edu}} }
\date{}
\maketitle

\centerline{{\bf Abstract}} \vspace*{0.1in}

In our recent work \cite{StojnicCSetam09,StojnicUpper10} we considered solving under-determined systems of linear equations with sparse solutions.
In a large dimensional and
statistical context we proved results related to performance of a polynomial $\ell_1$-optimization technique
when used for solving such systems. As one of the tools we used a probabilistic result of Gordon \cite{Gordon88}. In this paper we revisit this classic result in its core form and show how it can be reused to in a sense prove its own optimality.

\vspace*{0.25in} \noindent {\bf Index Terms: Random subspaces; linear systems of equations;
$\ell_1$-optimization}.

\end{singlespace}

\section{Introduction}
\label{sec:back}

We start by looking back at the problem that we considered in a series of recent work \cite{StojnicCSetam09,StojnicICASSP09,StojnicUpper10}. It essentially boils down to finding sparse solutions of under-determined systems of linear equations. In a more precise mathematical language we would like to find a $k$-sparse $\x$ such
that
\begin{equation}
A\x=\y \label{eq:system}
\end{equation}
where $A$ is an $m\times n$ ($m<n$) matrix and $\y$ is
an $m\times 1$ vector (here and in the rest of the paper, under $k$-sparse vector we assume a vector that has at most $k$ nonzero
components). Of course, the assumption will be that such an $\x$ exists.

To make writing in the rest of the paper easier, we will assume the
so-called \emph{linear} regime, i.e. we will assume that $k=\beta n$
and that the number of equations is $m=\alpha n$ where
$\alpha$ and $\beta$ are constants independent of $n$ (more
on the non-linear regime, i.e. on the regime when $m$ is larger than
linearly proportional to $k$ can be found in e.g.
\cite{CoMu05,GiStTrVe06,GiStTrVe07}).

A particularly successful technique for solving (\ref{eq:system}) is a linear programming relaxation called $\ell_1$-optimization. (Variations of the standard $\ell_1$-optimization from e.g.
\cite{CWBreweighted,SChretien08,SaZh08}) as well as those from \cite{SCY08,FL08,GN03,GN04,GN07,DG08} related to $\ell_q$-optimization, $0<q<1$
are possible as well.) Basic $\ell_1$-optimization algorithm finds $\x$ in
(\ref{eq:system}) by solving the following $\ell_1$-norm minimization problem
\begin{eqnarray}
\mbox{min} & & \|\x\|_{1}\nonumber \\
\mbox{subject to} & & A\x=\y. \label{eq:l1}
\end{eqnarray}
Due to its popularity the literature on the use of the above algorithm is rapidly growing. We below restrict our attention to two, in our mind, the most influential works that relate to (\ref{eq:l1}).

The first one is \cite{CRT} where the authors were able to show that if
$\alpha$ and $n$ are given, $A$ is given and satisfies the restricted isometry property (RIP) (more on this property the interested reader can find in e.g. \cite{Crip,CRT,Bar,Ver,ALPTJ09}), then
any unknown vector $\x$ with no more than $k=\beta n$ (where $\beta$
is a constant dependent on $\alpha$ and explicitly
calculated in \cite{CRT}) non-zero elements can be recovered by
solving (\ref{eq:l1}). As expected, this assumes that $\y$ was in
fact generated by that $\x$ and given to us.

However, the RIP is only a \emph{sufficient}
condition for $\ell_1$-optimization to produce the $k$-sparse solution of
(\ref{eq:system}). Instead of characterizing $A$ through the RIP
condition, in \cite{DonohoUnsigned,DonohoPol} Donoho looked at its geometric properties/potential. Namely,
in \cite{DonohoUnsigned,DonohoPol} Donoho considered the polytope obtained by
projecting the regular $n$-dimensional cross-polytope $C_p^n$ by $A$. He then established that
the solution of (\ref{eq:l1}) will be the $k$-sparse solution of
(\ref{eq:system}) if and only if
$AC_p^n$ is centrally $k$-neighborly
(for the definitions of neighborliness, details of Donoho's approach, and related results the interested reader can consult now already classic references \cite{DonohoUnsigned,DonohoPol,DonohoSigned,DT}). In a nutshell, using the results
of \cite{PMM,AS,BorockyHenk,Ruben,VS}, it is shown in
\cite{DonohoPol}, that if $A$ is a random $m\times n$
ortho-projector matrix then with overwhelming probability $AC_p^n$ is centrally $k$-neighborly (as usual, under overwhelming probability we in this paper assume
a probability that is no more than a number exponentially decaying in $n$ away from $1$). Miraculously, \cite{DonohoPol,DonohoUnsigned} provided a precise characterization of $m$ and $k$ (in a large dimensional context) for which this happens.

In a series of our own work (see, e.g. \cite{StojnicICASSP09,StojnicCSetam09}) we then created an alternative probabilistic approach which was capable of providing the precise characterization between $m$ and $k$ that guarantees success/failure of (\ref{eq:l1}) when used for finding the $k$-sparse solution of (\ref{eq:system}). The approach was a combination of geometric and purely probabilistic ideas and used bunch of tools from classical probability theory, (most notably a couple of results of Gordon from \cite{Gordon88} that we will revisit in this paper). The following theorem summarizes the results we obtained in e.g. \cite{StojnicICASSP09,StojnicCSetam09}.

\begin{theorem}(Exact threshold)
Let $A$ be an $m\times n$ matrix in (\ref{eq:system})
with i.i.d. standard normal components. Let
the unknown $\x$ in (\ref{eq:system}) be $k$-sparse. Further, let the location and signs of nonzero elements of $\x$ be arbitrarily chosen but fixed.
Let $k,m,n$ be large
and let $\alpha=\frac{m}{n}$ and $\beta_w=\frac{k}{n}$ be constants
independent of $m$ and $n$. Let $\erfinv$ be the inverse of the standard error function associated with zero-mean unit variance Gaussian random variable.  Further,
let all $\epsilon$'s below be arbitrarily small constants.
\begin{enumerate}
\item Let $\htheta_w$, ($\beta_w\leq \htheta_w\leq 1$) be the solution of
\begin{equation}
(1-\epsilon_{1}^{(c)})(1-\beta_w)\frac{\sqrt{\frac{2}{\pi}}e^{-(\erfinv(\frac{1-\theta_w}{1-\beta_w}))^2}}{\theta_w}-\sqrt{2}\erfinv ((1+\epsilon_{1}^{(c)})\frac{1-\theta_w}{1-\beta_w})=0.\label{eq:thmweaktheta}
\end{equation}
If $\alpha$ and $\beta_w$ further satisfy
\begin{equation}
\alpha>\frac{1-\beta_w}{\sqrt{2\pi}}\left (\sqrt{2\pi}+2\frac{\sqrt{2(\erfinv(\frac{1-\htheta_w}{1-\beta_w}))^2}}{e^{(\erfinv(\frac{1-\htheta_w}{1-\beta_w}))^2}}-\sqrt{2\pi}
\frac{1-\htheta_w}{1-\beta_w}\right )+\beta_w
-\frac{\left ((1-\beta_w)\sqrt{\frac{2}{\pi}}e^{-(\erfinv(\frac{1-\hat{\theta}_w}{1-\beta_w}))^2}\right )^2}{\hat{\theta}_w}\label{eq:thmweakalpha}
\end{equation}
then with overwhelming probability the solution of (\ref{eq:l1}) is the $k$-sparse $\x$ from (\ref{eq:system}).
\item Let $\htheta_w$, ($\beta_w\leq \htheta_w\leq 1$) be the solution of
\begin{equation}
(1+\epsilon_{2}^{(c)})(1-\beta_w)\frac{\sqrt{\frac{2}{\pi}}e^{-(\erfinv(\frac{1-\theta_w}{1-\beta_w}))^2}}{\theta_w}-\sqrt{2}\erfinv ((1-\epsilon_{2}^{(c)})\frac{1-\theta_w}{1-\beta_w})=0.\label{eq:thmweaktheta1}
\end{equation}
If on the other hand $\alpha$ and $\beta_w$ satisfy
\begin{multline}
\hspace{-.5in}\alpha<\frac{1}{(1+\epsilon_{1}^{(m)})^2}\left ((1-\epsilon_{1}^{(g)})(\htheta_w+\frac{2(1-\beta_w)}{\sqrt{2\pi}} \frac{\sqrt{2(\erfinv(\frac{1-\htheta_w}{1-\beta_w}))^2}}{e^{(\erfinv(\frac{1-\htheta_w}{1-\beta_w}))^2}})
-\frac{\left ((1-\beta_w)\sqrt{\frac{2}{\pi}}e^{-(\erfinv(\frac{1-\hat{\theta}_w}{1-\beta_w}))^2}\right )^2}{\hat{\theta}_w(1+\epsilon_{3}^{(g)})^{-2}}\right )\label{eq:thmweakalpha}
\end{multline}
then with overwhelming probability there will be a $k$-sparse $\x$ (from a set of $\x$'s with fixed locations and signs of nonzero components) that satisfies (\ref{eq:system}) and is \textbf{not} the solution of (\ref{eq:l1}).
\end{enumerate}
\label{thm:thmweakthr}
\end{theorem}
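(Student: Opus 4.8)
The plan is to reduce the recovery question to a deterministic property of the null space of $A$ and then to control the resulting random optimization with Gordon's comparison inequality from \cite{Gordon88}, exactly as in the achievability route of \cite{StojnicCSetam09} but pushed in both directions so as to pin down the sharp constant. First I would record the standard null-space characterization: writing $\mathcal{S}$ for the support of $\x$ and $\sigma_i=\mathrm{sign}(x_i)$, the $k$-sparse $\x$ is the unique solution of (\ref{eq:l1}) if and only if every nonzero $\w$ with $A\w=\0$ satisfies $\sum_{i\notin\mathcal{S}}|w_i|>-\sum_{i\in\mathcal{S}}\sigma_i w_i$; this follows from $\|\x+\w\|_1\ge\|\x\|_1+\sum_{i\in\mathcal{S}}\sigma_i w_i+\sum_{i\notin\mathcal{S}}|w_i|$ and testing both $\pm\w$. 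By the rotational invariance of the i.i.d. Gaussian $A$ I may fix $\mathcal{S}=\{n-k+1,\dots,n\}$ and fix the signs, so that the entire question is governed by the sign of
$$g(A)=\min_{A\w=\0,\ \|\w\|_2=1}\ f(\w),\qquad f(\w)=\sum_{i\notin\mathcal{S}}|w_i|+\sum_{i\in\mathcal{S}}\sigma_i w_i,$$
with success corresponding to $g(A)>0$ and failure (part 2) to $g(A)<0$.

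\textbf{Gordon reformulation.} Enforcing $A\w=\0$ through a Lagrange multiplier $\y\in\mathbb{R}^m$ turns $g(A)$ into the canonical min-max
$$g(A)=\min_{\|\w\|_2=1}\ \max_{\y\in\mathbb{R}^m}\ (f(\w)+\y^TA\w),$$
which features the Gaussian bilinear form $\y^TA\w$. I would then invoke Gordon's comparison theorem to replace $\y^TA\w$ by the separable process $\|\w\|_2\,\g^T\y+\|\y\|_2\,\h^T\w$ with independent standard Gaussian $\g\in\mathbb{R}^m$, $\h\in\mathbb{R}^n$. The $\y$-maximization of the surrogate decouples: it is finite only on the half-space $\h^T\w\le-\|\g\|_2$ (with $\|\g\|_2\approx\sqrt{m}$), leaving a purely deterministic extremal problem in $\w$ driven by $\h$. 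Gordon's inequality supplies the lower bound on $\mathbb{E}\,g(A)$ behind part 1. The content of part 2, and the precise sense in which the paper ``reuses Gordon to prove its own optimality,'' is that the objective is convex in $\w$ and linear in $\y$, so the same Gaussian comparison can be driven in the reverse direction to yield a matching upper bound; showing that this upper bound coincides asymptotically with the lower bound (i.e. that Gordon's estimate is tight here) is what makes the transition sharp.

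\textbf{The deterministic extremal problem (main obstacle).} The remaining task is to evaluate, for $n\to\infty$,
$$\min_{\|\w\|_2=1,\ \h^T\w\le-\sqrt{m}}\ f(\w)$$
and to locate the $(\alpha,\beta_w)$ at which its expected value over $\h$ crosses zero. I would solve the inner problem by Lagrangian duality in the two scalar multipliers (one for $\|\w\|_2=1$, one for the linear $\h$-constraint); the optimizer $\w$ then acquires an explicit soft-thresholding form in the coordinates $h_i$, handled separately on $\mathcal{S}$ and on $\mathcal{S}^c$. Passing to the limit, the empirical distribution of the i.i.d. Gaussian $h_i$ lets me replace the coordinate sums by Gaussian integrals; the optimal threshold level condenses to the scalar $\htheta_w$, and the Gaussian quantile structure produces the $\erfinv$ terms. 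The stationarity condition for $\htheta_w$ is exactly (\ref{eq:thmweaktheta})/(\ref{eq:thmweaktheta1}), and substituting the optimal $\htheta_w$ into the ``value $=0$'' balance yields the critical $\alpha$ in (\ref{eq:thmweakalpha}). I expect this width/threshold calculation to be the main analytic obstacle: it is where every constant is fixed, and where the achievability and converse sides must be shown to meet at the same $\htheta_w$.

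\textbf{Concentration and conclusion.} Finally I would upgrade the expectation statements to overwhelming probability. Since $g(A)$ is a $1$-Lipschitz function of the Gaussian entries of $A$ (and the surrogate is Lipschitz in $(\g,\h)$), Gaussian concentration of measure gives deviation bounds that are exponentially small in $n$; the constants $\epsilon_1^{(c)},\epsilon_2^{(c)},\epsilon_1^{(m)},\epsilon_1^{(g)},\epsilon_3^{(g)}$ in the statement are precisely the slacks absorbing these fluctuations together with the $O(1/\sqrt{n})$ corrections in $\|\g\|_2$. For part 1 this gives $g(A)>0$ with overwhelming probability under (\ref{eq:thmweakalpha}). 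For part 2 the matching upper bound gives $g(A)<0$ with overwhelming probability, i.e. there exists a unit $\w$ with $A\w=\0$ and $f(\w)<0$; then for small $t>0$ the vector $\x+t\w$ is feasible in (\ref{eq:system}) with $\|\x+t\w\|_1=\|\x\|_1+t\,f(\w)<\|\x\|_1$, so the $k$-sparse $\x$ with the fixed support and signs is not the solution of (\ref{eq:l1}), which is the asserted failure.
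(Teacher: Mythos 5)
Your proposal is correct and follows essentially the same route as the paper's: the paper proves this theorem by deferring to \cite{StojnicCSetam09,StojnicUpper10}, and the machinery it then develops in Sections 3--5 --- the null-space set $S=\{\w:\|\w\|_2=1,\ f(\w)\leq 0\}$ with $f(\w)=\sum_{i\in\mathcal{S}}\sigma_iw_i+\sum_{i\notin\mathcal{S}}|w_i|$, Gordon's comparison (Lemma \ref{eq:taulemma}), strong duality enabled by the positive homogeneity of $f$ (the sphere-to-ball relaxation of Section \ref{sec:homf}), the scalarized width/threshold computation yielding the $\erfinv$ equations, and concentration of the scalar surrogate --- is precisely your plan, including the key point that duality lets Gordon's one-sided inequality be run in both directions so the threshold becomes sharp. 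The only cosmetic caveats: your claim that $g(A)$ is $1$-Lipschitz in $A$ is both dubious and unnecessary (concentration is applied to the surrogate quantities $\xi(f,\g)$ and $w(f,\g)$, as you also note), and your appeal to convexity in $\w$ for the reverse comparison tacitly relies on that same homogeneity-based relaxation of the nonconvex sphere constraint.
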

\begin{proof}
The first part was established in \cite{StojnicCSetam09} and the second one was established in \cite{StojnicUpper10}. An alternative way of establishing the same set of results was also presented in \cite{StojnicEquiv10}.
\end{proof}

We below provide a more informal interpretation of what was established by the above theorem. Assume the setup of the above theorem. Let $\alpha_w$ and $\beta_w$ satisfy the following:

\noindent \underline{\underline{\textbf{Fundamental characterization of the $\ell_1$ performance:}}}

\begin{center}
\shadowbox{$
(1-\beta_w)\frac{\sqrt{\frac{2}{\pi}}e^{-(\erfinv(\frac{1-\alpha_w}{1-\beta_w}))^2}}{\alpha_w}-\sqrt{2}\erfinv (\frac{1-\alpha_w}{1-\beta_w})=0.
$}
-\vspace{-.5in}\begin{equation}
\label{eq:thmweaktheta2}
\end{equation}
\end{center}

Then:
\begin{enumerate}
\item If $\alpha>\alpha_w$ then with overwhelming probability the solution of (\ref{eq:l1}) is the $k$-sparse $\x$ from (\ref{eq:system}).
\item If $\alpha<\alpha_w$ then with overwhelming probability there will be a $k$-sparse $\x$ (from a set of $\x$'s with fixed locations and signs of nonzero components) that satisfies (\ref{eq:system}) and is \textbf{not} the solution of (\ref{eq:l1}).
    \end{enumerate}

As mentioned above, to establish the result given in (\ref{eq:thmweaktheta2}) we used a couple of classic probabilistic results from \cite{Gordon88}. In the following section we will recall these results and see how they can be reconnected and in a way optimized.

We organize the rest of the paper in the following way. In Section
\ref{sec:theorems} we introduce and briefly discuss the two theorems from \cite{Gordon88} that we plan to revisit in this paper, while in
Section
\ref{sec:exact} we create the mechanism for optimizing the second of the theorems in certain scenarios. Finally, in Sections \ref{sec:discussion} and \ref{sec:conc} we discuss obtained results.

\section{Key theorems}
\label{sec:theorems}

In this section we introduce the above mentioned theorems that will be of key importance in our subsequent
considerations.

First we recall the following results from \cite{Gordon88} that relates to statistical properties of certain Gaussian processes.
\begin{theorem}(\cite{Gordon88})
\label{thm:Gordonmesh1} Let $X_{ij}$ and $Y_{ij}$, $1\leq i\leq n,1\leq j\leq m$, be two centered Gaussian processes which satisfy the following inequalities for all choices of indices
\begin{enumerate}
\item $E(X_{ij}^2)=E(Y_{ij}^2)$
\item $E(X_{ij}X_{ik})=E(Y_{ij}Y_{ik})$
\item $E(X_{ij}X_{lk})=E(Y_{ij}Y_{lk}), i\neq l$.
\end{enumerate}
Then
\begin{equation*}
P(\bigcap_{i}\bigcup_{j}(X_{ij}\geq \lambda_{ij}))\leq P(\bigcap_{i}\bigcup_{j}(Y_{ij}\geq \lambda_{ij})).
\end{equation*}
\end{theorem}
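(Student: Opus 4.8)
The plan is to prove this by the Gaussian interpolation (smart-path) method of Slepian and Kahane, adapted by Gordon to the nested $\bigcap\bigcup$ event structure. First I would pass from the probability statement to an inequality between expectations of a smooth functional. Take $X$ and $Y$ to be independent, fix a smooth nonincreasing $h\colon\mathbb{R}\to[0,1]$ approximating the indicator $\mathbf{1}(\cdot\le 0)$, and set
\[ f_i(z)=\prod_{j=1}^{m} h(\lambda_{ij}-z_{ij}), \qquad \Phi(z)=\prod_{i=1}^{n}\bigl(1-f_i(z)\bigr). \]
Here $1-f_i$ approximates the indicator of $\bigcup_j\{z_{ij}\ge\lambda_{ij}\}$, so $\Phi$ approximates the indicator of the full event $\bigcap_i\bigcup_j\{z_{ij}\ge\lambda_{ij}\}$. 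Letting $h$ tend to the indicator at the end, it then suffices to show $E[\Phi(X)]\le E[\Phi(Y)]$.

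Second, I would introduce the interpolating process $Z(t)=\sqrt{t}\,X+\sqrt{1-t}\,Y$ for $t\in[0,1]$ and put $\psi(t)=E[\Phi(Z(t))]$, so that $\psi(0)=E[\Phi(Y)]$ and $\psi(1)=E[\Phi(X)]$; the goal becomes $\psi'(t)\le 0$ on $[0,1]$. Differentiating in $t$ and applying the Gaussian integration-by-parts identity (Stein/Price) to each resulting term, the first-order pieces cancel by centering and one is left with
\[ \psi'(t)=\tfrac12\sum_{(i,j)}\sum_{(k,l)}\bigl(E[X_{ij}X_{kl}]-E[Y_{ij}Y_{kl}]\bigr)\,E\!\left[\frac{\partial^2\Phi}{\partial z_{ij}\,\partial z_{kl}}(Z(t))\right]. \]

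Third, the entire argument rests on the signs of the mixed second derivatives of $\Phi$, which is exactly where the product-of-sums structure is used. Since $h\ge 0$, $h'\le 0$, and each factor $1-f_i\in[0,1]$, a direct differentiation shows that $\Phi$ is coordinatewise nonincreasing, that $\partial^2\Phi/\partial z_{ij}\partial z_{il}\le 0$ for two entries in the same row ($j\ne l$), and that $\partial^2\Phi/\partial z_{ij}\partial z_{kl}\ge 0$ for entries in distinct rows ($i\ne k$). Matching these against the hypotheses then kills or signs every summand: the diagonal terms vanish by condition~1, the same-row terms are controlled by condition~2, and the cross-row terms by condition~3, so each product in $\psi'(t)$ is $\le 0$. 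This gives $\psi(1)\le\psi(0)$ and, after the limit $h\to\mathbf{1}$, the claimed inequality. (With all three hypotheses read as equalities the two processes are equidistributed and $\psi'\equiv 0$; the strength of the statement is really that the between-row covariances of $X$ may be no larger, and the within-row covariances no smaller, than those of $Y$.)

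The main obstacle is analytic bookkeeping rather than a single deep idea. I would need to justify differentiating under the expectation and the Gaussian integration-by-parts step, which requires finiteness of the relevant moments of the second derivatives of $\Phi$; this follows from the boundedness of $h$, $h'$, and $h''$ together with the product form. I would then have to carry out the limit $h\to\mathbf{1}$ so that the smooth-functional inequality transfers cleanly to the probabilities of the events cut out by the thresholds $\lambda_{ij}$. The sign computation for the mixed partials, though elementary, is the combinatorial heart and must be performed for the $\prod_i(1-\prod_j h)$ form of $\Phi$, not for a generic smooth indicator.
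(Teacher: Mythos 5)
You should first note that the paper offers no proof of this theorem at all: it is quoted from Gordon \cite{Gordon88} and used as a black box, so there is no internal argument to compare against. Your interpolation sketch is, however, essentially the standard proof (and in substance Gordon's own): smooth the event through the product functional $\Phi=\prod_i\bigl(1-\prod_j h\bigr)$, interpolate $Z(t)=\sqrt{t}\,X+\sqrt{1-t}\,Y$ with $X$ and $Y$ independent, differentiate and apply Gaussian integration by parts to get $\psi'(t)=\tfrac12\sum\bigl(E[X_{ij}X_{kl}]-E[Y_{ij}Y_{kl}]\bigr)E[\partial^2\Phi]$, and then sign the mixed partials: the diagonal terms die by condition 1, the same-row terms pair a nonpositive second derivative with condition 2, and the cross-row terms pair a nonnegative second derivative with condition 3. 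You were also right to read the paper's conditions 2 and 3 as the inequalities $E(X_{ij}X_{ik})\geq E(Y_{ij}Y_{ik})$ and $E(X_{ij}X_{lk})\leq E(Y_{ij}Y_{lk})$, $i\neq l$: as printed (all equalities, despite the word ``inequalities'' in the statement) the two processes are equidistributed and the conclusion is vacuous; the inequality version is what Gordon proves and what the paper's subsequent Lemma actually requires.

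One slip to fix. With $h$ nonincreasing and approximating $\mathbf{1}(\cdot\leq 0)$, your $f_i(z)=\prod_j h(\lambda_{ij}-z_{ij})$ satisfies $h(\lambda_{ij}-z_{ij})\approx\mathbf{1}(z_{ij}\geq\lambda_{ij})$, so $f_i$ approximates the indicator of $\bigcap_j\{z_{ij}\geq\lambda_{ij}\}$ and $1-f_i$ that of $\bigcup_j\{z_{ij}<\lambda_{ij}\}$ --- not, as you assert, of $\bigcup_j\{z_{ij}\geq\lambda_{ij}\}$; the correct choice is $f_i(z)=\prod_j h(z_{ij}-\lambda_{ij})$. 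The slip happens to be self-cancelling: mixed second-derivative signs are invariant under the reflection $z\mapsto-z$ (your computations giving same-row partials $\leq 0$ and cross-row partials $\geq 0$ hold for either choice of $f_i$), and your $\Phi$ as written proves $P(\bigcap_i\bigcup_j\{X_{ij}<\lambda_{ij}\})\leq P(\bigcap_i\bigcup_j\{Y_{ij}<\lambda_{ij}\})$, which is equivalent to the stated theorem upon applying it to $(-X,-Y,-\lambda)$, since the covariance hypotheses are invariant under a global sign flip. Still, as written the displayed formula and the claimed event approximation contradict each other, so one of them must be corrected, and the passage to the limit $h\to\mathbf{1}$ should be done with a sandwich $\mathbf{1}(u\leq 0)\geq h_\epsilon(u)\geq\mathbf{1}(u\leq-\epsilon)$ plus continuity of the Gaussian law to absorb the strict/non-strict boundary. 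Finally, note that the coordinatewise monotonicity of $\Phi$ you establish is never used in the argument; only the signs of the mixed second derivatives matter.
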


Based on the above theorem Gordon then went further and proved a more specific type of result now widely known as ``Escape through a mesh" theorem. The result essentially looks at a particular class of Gaussian processes and connects them with the geometry of random subspaces and their intersections with given fixed subsets of high-dimensional unit spheres.
\begin{theorem}(\cite{Gordon88} Escape through a mesh)
\label{thm:Gordonmesh} Let $S$ be a subset of the unit Euclidean
sphere $S^{n-1}$ in $R^{n}$. Let $Y$ be a random
$(n-m)$-dimensional subspace of $R^{n}$, distributed uniformly in
the Grassmanian with respect to the Haar measure. Let
\begin{equation}
w_D(S)=E\sup_{\w\in S} (\g^T\w) \label{eq:widthdef}
\end{equation}
where $\h$ is a random column vector in $R^{n}$ with i.i.d. ${\cal
N}(0,1)$ components. Assume that
$w_D(S)<\left ( \sqrt{m}-\frac{1}{4\sqrt{m}}\right )$. Then
\begin{equation}
P(Y\cap S=0)>1-3.5e^{-\frac{\left (
\sqrt{m}-\frac{1}{4\sqrt{m}}-w_D(S) \right ) ^2}{18}}.
\label{eq:thmesh}
\end{equation}
\end{theorem}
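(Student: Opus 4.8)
The plan is to give the random subspace $Y$ a concrete model, convert the geometric event $\{Y\cap S=0\}$ into a quantitative statement about a bilinear Gaussian process, and then feed that process to Theorem~\ref{thm:Gordonmesh1}. Since the kernel of an $m\times n$ matrix $A$ with i.i.d. ${\cal N}(0,1)$ entries is almost surely an $(n-m)$-dimensional subspace whose law is the Haar measure on the Grassmannian, I would set $Y=\{\x\in R^{n}:A\x=\0\}$. As $S\subset S^{n-1}$ excludes the origin, the event $\{Y\cap S=0\}$ is exactly $\{A\w\neq\0\ \mbox{for every}\ \w\in S\}$, i.e. (for closed $S$, so that the extremum is attained)
\be
\min_{\w\in S}\|A\w\|>0,\qquad \min_{\w\in S}\|A\w\|=\min_{\w\in S}\max_{\u\in S^{m-1}}\u^{T}A\w .
\ee
Thus it suffices to control the lower tail of the min--max of the centered Gaussian process $X_{\w,\u}=\u^{T}A\w$, whose covariance is $E[X_{\w,\u}X_{\w',\u'}]=(\u^{T}\u')(\w^{T}\w')$.

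Next I would introduce the decoupled comparison process $Z_{\w,\u}=\g^{T}\w+\h^{T}\u$, with $\g\in R^{n}$ and $\h\in R^{m}$ independent standard Gaussian vectors, whose covariance is $E[Z_{\w,\u}Z_{\w',\u'}]=(\w^{T}\w')+(\u^{T}\u')$. I would first match variances by the standard device of adjoining a single common ${\cal N}(0,1)$ summand to $X_{\w,\u}$ (which only shifts the min--max by a scalar); after this, reading $\w$ as the outer (min) index and $\u$ as the inner (max) index, the variances and the within-$\w$ covariances of the two processes coincide on the product of unit spheres, while the cross-$\w$ covariances differ by $(1-\u^{T}\u')(1-\w^{T}\w')\ge0$, so $X$ is everywhere at least as cross-correlated as $Z$. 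Theorem~\ref{thm:Gordonmesh1} then transfers the lower tail and yields, for a universal constant $C$, a bound of the shape
\be
P\Big(\min_{\w\in S}\max_{\u\in S^{m-1}}\u^{T}A\w\le c\Big)\ \le\ C\,P\Big(\min_{\w\in S}\max_{\u\in S^{m-1}}Z_{\w,\u}\le c\Big),
\ee
where the benchmark collapses to $\min_{\w}\max_{\u}Z_{\w,\u}=\|\h\|+\min_{\w\in S}\g^{T}\w=\|\h\|-\sup_{\w\in S}(-\g^{T}\w)$.

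The benchmark is now governed by two independent and transparent quantities. By the symmetry of $\g$ we have $E\sup_{\w\in S}(-\g^{T}\w)=w_D(S)$ from (\ref{eq:widthdef}), and an elementary Gamma-function estimate gives $E\|\h\|\ge\sqrt m-\frac{1}{4\sqrt m}$, so the mean of the benchmark is at least the strictly positive gap $\sqrt m-\frac{1}{4\sqrt m}-w_D(S)$ that the hypothesis assumes. To turn this mean statement into the exponential tail of (\ref{eq:thmesh}) I would invoke Gaussian concentration of measure: both $\h\mapsto\|\h\|$ and $\g\mapsto\sup_{\w\in S}(\g^{T}\w)$ are $1$-Lipschitz functions of standard Gaussian vectors, so each deviates below (respectively above) its mean with a sub-Gaussian tail. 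Allocating the gap between the two deviations and combining the tails then bounds $P(\min_{\w}\|A\w\|\le0)$, the complement of $\{Y\cap S=0\}$, by an expression of the form $C'e^{-(\sqrt m-\frac{1}{4\sqrt m}-w_D(S))^{2}/C''}$.

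The reductions above --- the kernel model, the Gamma estimate, and the Lipschitz bounds --- are routine; the two genuinely delicate points, which I expect to be the main obstacles, both live in the comparison step. First, Theorem~\ref{thm:Gordonmesh1} is stated for finite index families, whereas $S\times S^{m-1}$ is a continuum: I would pass to finite $\epsilon$-nets, apply the comparison there, and use the almost-sure continuity of both processes together with compactness to remove the discretization, verifying that the min--max and every covariance inequality survive in the limit. Second, and most sensitively, the exact constants $3.5$ and $18$ of (\ref{eq:thmesh}) are not produced by the comparison itself; they emerge from the concentration step --- from the particular split of the gap between the fluctuations of $\|\h\|$ and of the Gaussian width, combined with the factor $C$ inherited from the variance-matching device. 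A more careful allocation would in fact give smaller constants, so reproducing Gordon's precise statement is a matter of faithfully following his specific (unoptimized) tail estimate rather than of any additional conceptual idea.
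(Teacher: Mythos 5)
Your proposal is correct and follows essentially the same route as the source the paper relies on for this theorem: the paper itself gives no proof beyond the citation to \cite{Gordon88}, and your argument --- realizing $Y$ as the kernel of an i.i.d. Gaussian $A$, writing the event $\{Y\cap S=0\}$ via $\min_{\w\in S}\max_{\u\in S^{m-1}}\u^TA\w$, adjoining the scalar $g$ to match variances so that Theorem~\ref{thm:Gordonmesh1} (in its inequality form, which is how Gordon states it and how the paper's Lemma~\ref{eq:taulemma} uses it) compares against the decoupled benchmark $\|\h\|_2-\sup_{\w\in S}(-\g^T\w)$, and finishing with the Gamma estimate $E\|\h\|_2\geq\sqrt{m}-\frac{1}{4\sqrt{m}}$ together with Lipschitz--Gaussian concentration of $\|\h\|_2$ and of the width functional --- is precisely Gordon's original mechanism, the same machinery the paper itself redeploys in Section~\ref{sec:probview}. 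The only shortfall, which you flag honestly, is that the specific constants $3.5$ and $18$ in (\ref{eq:thmesh}) are obtained only up to unoptimized universal constants rather than rederived exactly; this is consistent with the paper's own remark that the precise values of these constants are an open side issue and not its concern.
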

\textbf{Remark}: Gordon's original constant $3.5$ was substituted by
$2.5$ in \cite{RVmesh}. Both constants are not subject of our detailed considerations. However, we do mention in passing that to the best of our knowledge it is an open problem to determine the exact value of this constant as well as to improve and ultimately determine the exact value as well of somewhat high constant $18$.

In a more informal language, what Theorem \ref{thm:Gordonmesh} manages to create is a route to connect the location of a low-dimensional random subspace with respect to a given body and a seemingly simple quantity $w(S)$. Then as long as one can get a handle of $w(S)$ and dimensions are large enough one can get a pretty good feeling if a random flat will hit or miss the given body $S$. There are a couple of restrictions, though. What we call a body in an informal way is not really a body but rather a subset of the unit $n$-dimensional sphere and what we call a random flat is not really ``just" a random flat but actually a subspace chosen uniformly randomly from the Grassmanian (one can think of it as a uniformly random choice among all subspaces of dimension $n-m$). We believe that it is easier to get a real feeling of the power of Gordon's results if one for a moment leaves technicalities out of the picture and instead views things in a more informal way.

Along the same lines, in our opinion, to fully understand the miraculous importance of Theorem \ref{thm:Gordonmesh} it is maybe a good starting point to have a firm hold of understanding of the original geometric question that it answers. The question is incredibly simple: there is a set $S$ which is a subset of sphere $S^{n-1}$ in $R^n$. One then generates a uniformly random subspace (as we said above, in this paper, when we talk about uniformly random spaces/subspaces we of course view such a randomness in a Grassmanian sense) of dimension say $n-m$ (where of course $m\geq 0$) and wonders how likely is that such a subspace will intersect with $S$. One simple example that could help visualizing these high-dimensional geometric concepts would be to take $n=3$ and look at a spherical cap of the sphere $S^2$ in $R^3$. Then one can chose say $n-m=1$ and basically wonder how likely is that a random line through the origin would intersect such a spherical cap. Of course when $S$ is a spherical cap the answer is simple and can be obtained through a simple geometric consideration as the ratio of the spherical cap's area and the area of the entire unit sphere. On the other hand, geometrically speaking, it is immediately clear how much harder the question becomes if $S$ is not a spherical cap and $n$ and $n-m$ are large.

If one then looks back at the original question, which, as we discussed above, is purely geometrical, it seems almost unbelievable (at least a priori) that it can be transformed to a purely analytical problem. The incredible contribution of Theorem \ref{thm:Gordonmesh} is exactly in its success to create such a transformation and to effectively connect this geometric question on one side and the properties of Gaussian processes on the other side. The idea of moving everything to the analysis terrain is great on its own, however what is more astonishing is that often one can actually accomplish it. Still, when one moves to the analysis terrain there are several questions one should be able to tackle (the problem may just seem a but easier when transferred to the analysis terrain, but nobody guarantees that it is actually easy!). The two questions that we found most pressing are:
1) Can one get a handle of $w_D(S)$ for any $S$?
2) Roughly speaking, the theorem only specifies what will happen if $w_D(S)<\sqrt{m}$. Is there a definite answer as to what will happen if $w_D(S)\geq \sqrt{m}$?

When it comes to answering the first question it doesn't seem that the answer would be yes. Still, experience says that for many ``practical" sets $S$ one can actually handle $w_D(S)$ (see, e.g. \cite{StojnicCSetam09,StojnicCSetamBlock09}). Even if computing the exact value of $w_D(S)$ may not be feasible there are possible alternatives. For example, one can try to bound $w_D(S)$ and in a way provide at least some kind of answer to the original geometric question. On the other hand, when it comes to the second question one could envision two possible scenarios. Assuming that the answer to question 1) is no, one then may start looking at particular sets $S$ and then wonder which are the sets $S$ so that $w_D(S)$ can be handled. Then the first scenario would be to look at those $S$ for which $w_D(S)$ would not be computable. Then even if one can give a definite answer to question 2) the whole concept would appear as a raw theory without final analytical concreteness. The second scenario would on the other hand relate to those $S$ for which $w_D(S)$ can be computed. This scenario is actually probably the first next direction for possible further studies of Theorem \ref{thm:Gordonmesh}. In the following section we look at this very same scenario and observe that for certain $S$ one can actually provide a definite answer to question 2.

\section{Revisiting Escape through a mesh theorem} \label{sec:exact}

In the first part of this section we will look at a couple of technical details that relate to quantities from Theorem \ref{thm:Gordonmesh}. We first revisit $w_D(S)$. As stated in Theorem \ref{thm:Gordonmesh}, $w_D(S)$ is given as
\begin{equation}
w_D(S)=E\sup_{\w\in S} (\g^T\w). \label{eq:widthdef0}
\end{equation}
To be a bit more specific we will assume that set $S$ can be described through a functional equation, i.e we will say that
\begin{equation}
S=\{\w| ||\w||_2=1, f(\w)\leq 0\}.\label{eq:setSdef}
\end{equation}
We will then accordingly replace $w(S)$ by
\begin{eqnarray}
w_D(f)=E\sup & & (\g^T\w) \nonumber \\
\mbox{subject to} & & \|\w\|_2= 1\nonumber \\
& & f(\w)\leq 0. \label{eq:widthdef1}
\end{eqnarray}

\subsection{Deterministic view} \label{sec:detview}

Clearly to gain a complete control over $w_D(f)$ (and basically $w_D(S)$) one ultimately has to consider its random origin. However, before going through the randomness of the problem and we will try to provide a more information about $w_D(f)$ (and a couple other deterministic quantities that will be introduced below) on a deterministic level. Along these lines, to distinguish between deterministic and random portions of nature of $w_D(f)$ (i.e. $w_D(f)$) we will introduce quantity $w(f,\g)$ as
\begin{eqnarray}
w(f,\g)=\sup & & (\g^T\w) \nonumber \\
\mbox{subject to} & & \|\w\|_2= 1\nonumber \\
& & f(\w)\leq 0. \label{eq:widthdef1det}
\end{eqnarray}
Then clearly
\begin{equation}
w_D(S)=w_D(f)=E w(f,\g). \label{eq:widthdef2}
\end{equation}
As mentioned above, for the time being we will focus on $w_D(f,\g)$.
Also, to make the presentation easier we will assume that the $\sup$ in (\ref{eq:widthdef}), (\ref{eq:widthdef0}), (\ref{eq:widthdef1}), and (\ref{eq:widthdef1det}) can be replaced with a $\max$ (also for all other occasions in the paper where a $\sup$ may appear as more precise we will assume that scenarios are such that a $\max$ can replace it). Then (\ref{eq:widthdef1det}) can be rewritten as
\begin{eqnarray}
w(f,\g) =  \max & & \g^T\w \nonumber \\
\mbox{subject to} & &  \|\w\|_2= 1\nonumber \\
& & f(\w)\leq 0.\label{eq:widthdef1det1}
\end{eqnarray}
Transforming (\ref{eq:widthdef1det1}) a bit further one gets
\begin{eqnarray}
w(f,\g) =  -\min & & -\g^T\w \nonumber \\
\mbox{subject to} & &  \|\w\|_2= 1\nonumber \\
& & f(\w)\leq 0.\label{eq:widthdef1det2}
\end{eqnarray}
Using a Lagrangian multiplier one can move constraint on $f(\w)$ into the objective
\begin{equation}
w(f,\g) =  -\min_{||\w||_2=1}\max_{\lambda\geq 0} -\g^T\w+\lambda f(\w).\label{eq:widthdef1det3}
\end{equation}
One then easily has
\begin{equation}
w(f,\g) \leq  -\max_{\lambda\geq 0}\min_{||\w||_2=1} -\g^T\w+\lambda f(\w),\label{eq:widthdef1det4}
\end{equation}
and
\begin{equation}
w(f,\g) \leq  \min_{\lambda\geq 0}\max_{||\w||_2=1} \g^T\w-\lambda f(\w).\label{eq:widthdef1det5}
\end{equation}
We will now leave the deterministic portion of the analysis of $w(S)$ (or to be more precise the analysis of $w(f,\g)$) for a moment and switch to consideration of a seemingly different optimization problem. Namely we will consider the following deterministic optimization problem
\begin{eqnarray}
\tau(f,A) =  \min & & f(\w) \nonumber \\
\mbox{subject to} & &  A\w=0 \nonumber \\
& & \|\w\|_2= 1,\label{eq:taudef1det1}
\end{eqnarray}
and through it we will introduce a new quantity $\tau(f,A)$. This quantity will be in a way an ``almost" counterpart to $w(f,\g)$. At this point the purpose of introducing such a quantity may not be clear. However, as we progress further it will become more apparent what its meaning is and why we introduced it. Here we only mention roughly that $\tau(f,A)$ can be thought of as an indicator that subspace of
$\w$'s, $A\w$, and the unit sphere $\|\w\|_2=1$ have an intersection that is also contained in $S$. Namely, if $\tau(f,A)<0$ then indeed there is a $\w$ such that $A\w=0$, $\|\w\|_2=1$, and $f(\w)<0$. However by the definition of $S$ from (\ref{eq:setSdef}) such a $\w$ is actually in $S$. On the other hand if $\tau(f,A)\geq0$ there is no $\w$ such that $A\w=0$, $\|\w\|_2=1$, and $f(\w)<0$ and automatically the intersection of subspace $A\w$ and the unit sphere $\|\w\|_2=1$ is missing set $S$.

Going back to (\ref{eq:taudef1det1}) and using again the Lagrangian multipliers one can then move the subspace constraint into the objective
\begin{equation}
\tau(f,A) =  \min_{||\w||_2=1}\max_{\nu} f(\w)+\nu^TA\w.\label{eq:taudef1det2}
\end{equation}
Now, we will assume that the structure of set $S$ is determined by a function $f(\w)$ for which it also holds
\begin{eqnarray}
\tau(f,A) & = & \min_{||\w||_2=1}\max_{\nu} f(\w)+\nu^TA\w\nonumber \\
& = &  \max_{\nu} \min_{||\w||_2=1}f(\w)+\nu^TA\w.\label{eq:taudef1det3}
\end{eqnarray}
In fact, as it will be clear from the subsequent analysis, the property that we will mostly utilize is actually the sign of $\tau(f,A)$. Having that in mind one can actually relax a bit requirement (\ref{eq:taudef1det3})
\begin{eqnarray}
\mbox{sign}(\tau(f,A)) & = & \mbox{sign}(\min_{||\w||_2=1}\max_{\nu} f(\w)+\nu^TA\w)\nonumber \\
& = &  \mbox{sign}(\max_{\nu} \min_{||\w||_2=1}f(\w)+\nu^TA\w).\label{eq:taudef1det31}
\end{eqnarray}
Clearly, (\ref{eq:taudef1det3}) or (\ref{eq:taudef1det31}) will not hold for any $f(\w)$ and any $A$. However, we will assume that there are $f(\w)$ and $A$ for which they will hold. After rearranging (\ref{eq:taudef1det3}) a bit we have
\begin{equation}
-\tau(f,A) = \min_{\nu} \max_{||\w||_2=1} -f(\w)-\nu^TA\w,\label{eq:taudef1det4}
\end{equation}
and after rearranging (\ref{eq:taudef1det31}) a bit we have
\begin{equation}
-\mbox{sign}(\tau(f,A)) = \mbox{sign}(\min_{\nu} \max_{||\w||_2=1} -f(\w)-\nu^TA\w).\label{eq:taudef1det41}
\end{equation}
At this point one should note that while quantities $w(f,\g)$ and $\tau(f,A)$ are random, so far they have been treated as deterministic. In other words, we viewed them as functions of a fixed pair $(\g,A)$. Moreover, they are in a good enough shape that we can switch to a probabilistic portion of their analysis. Probabilistic portion of the analysis will essentially contain an analysis that will determine typical behavior of these two quantities when components of $\g$ and $A$ are i.i.d. standard normals.

\subsection{Probabilistic view} \label{sec:probview}

To obtain a probabilistic view on quantities $w(f,\g)$ and $\tau(f,A)$ we will invoke the results of Theorem \ref{thm:Gordonmesh1}. We will do so through the following lemma which is slightly modified Lemma 3.1 from \cite{Gordon88} (Lemma 3.1 is a direct consequence of Theorem \ref{thm:Gordonmesh1} and the backbone of the escape through a mesh theorem).
\begin{lemma}
Let $A$ be an $m\times n$ matrix with i.i.d. standard normal components. Let $\g$ and $\h$ be $n\times 1$ and $m\times 1$ vectors, respectively, with i.i.d. standard normal components. Also, let $g$ be a standard normal random variable. Then
\begin{equation}
P(\min_{\nu\in R^n\setminus 0}\max_{\|\w\|_2=1}(-\nu^T A\w +\|\nu\|_2 g-\zeta_{\w,\nu})\geq 0)\geq P(\min_{\nu\in R^n\setminus 0}\max_{\|\w\|_2=1}(\|\nu\|_2\g^T\w+\h^T\nu-\zeta_{\w,\nu})\geq 0).\label{eq:problemma}
\end{equation}\label{eq:taulemma}
\end{lemma}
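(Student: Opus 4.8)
The statement is a direct instance of Gordon's comparison inequality (Theorem \ref{thm:Gordonmesh1}), and the plan is to manufacture two centered Gaussian processes whose extremal events are exactly the two probabilities in (\ref{eq:problemma}). Indexing by the pair $(\w,\nu)$ with $\|\w\|_2=1$, I would set
\[
Y_{\w,\nu}=-\nu^TA\w+\|\nu\|_2\, g,\qquad X_{\w,\nu}=\|\nu\|_2\,\g^T\w+\h^T\nu .
\]
Since $\min_{\nu}\max_{\|\w\|_2=1}(Z_{\w,\nu}-\zeta_{\w,\nu})\geq 0$ is the same event as $\bigcap_{\nu}\bigcup_{\w}\{Z_{\w,\nu}\geq \zeta_{\w,\nu}\}$, the left-hand probability in (\ref{eq:problemma}) equals $P(\bigcap_\nu\bigcup_\w\{Y_{\w,\nu}\ge\zeta_{\w,\nu}\})$ and the right-hand one equals $P(\bigcap_\nu\bigcup_\w\{X_{\w,\nu}\ge\zeta_{\w,\nu}\})$, with $\nu$ playing the role of the intersection index $i$, $\w$ the role of the union index $j$, and $\lambda_{\w,\nu}=\zeta_{\w,\nu}$. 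Thus, once the covariance hypotheses are verified for this pairing, Theorem \ref{thm:Gordonmesh1} yields $P(\bigcap\bigcup\{X\ge\lambda\})\le P(\bigcap\bigcup\{Y\ge\lambda\})$, which is precisely (\ref{eq:problemma}).

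The heart of the argument is the covariance computation, which uses only $E[\g\g^T]=I$, $E[\h\h^T]=I$, $E[A_{ij}A_{kl}]=\delta_{ik}\delta_{jl}$, $E[g^2]=1$, and the mutual independence of $A,\g,\h,g$. A short calculation gives
\[
E\,X_{\w,\nu}X_{\w',\nu'}=\|\nu\|_2\|\nu'\|_2\,\w^T\w'+\nu^T\nu',\qquad
E\,Y_{\w,\nu}Y_{\w',\nu'}=\nu^T\nu'\,\w^T\w'+\|\nu\|_2\|\nu'\|_2 .
\]
Setting $\w=\w'$, $\nu=\nu'$ and using $\|\w\|_2=1$ shows both variances equal $2\|\nu\|_2^2$ (hypothesis 1); setting $\nu=\nu'$ shows both equal $\|\nu\|_2^2(\w^T\w'+1)$ (hypothesis 2, the same intersection index). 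For the cross-index case I would record the key identity
\[
E\,X_{\w,\nu}X_{\w',\nu'}-E\,Y_{\w,\nu}Y_{\w',\nu'}=(\w^T\w'-1)\,(\|\nu\|_2\|\nu'\|_2-\nu^T\nu') .
\]
Both factors have a fixed sign: $\w^T\w'-1\le 0$ because $\|\w\|_2=\|\w'\|_2=1$, while $\|\nu\|_2\|\nu'\|_2-\nu^T\nu'\ge 0$ by Cauchy--Schwarz; hence the difference is $\le 0$, which is the third hypothesis in the direction needed to conclude $P(X\text{-side})\le P(Y\text{-side})$ (here one invokes the comparison theorem in its inequality form for the cross-index condition, which is the form actually established in \cite{Gordon88}).

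I would stress that the real content is the calibration of the two auxiliary terms $\|\nu\|_2\, g$ and $\h^T\nu$: without them the covariance difference is $(\w^T\w')(\|\nu\|_2\|\nu'\|_2-\nu^T\nu')$, whose sign is indefinite because $\w^T\w'$ ranges over $[-1,1]$ on the sphere. Adding $\|\nu\|_2\, g$ to $Y$ and $\h^T\nu$ to $X$ is exactly what converts $\w^T\w'$ into $\w^T\w'-1$ and thereby pins the sign. The main obstacle, rather than the algebra, is that Theorem \ref{thm:Gordonmesh1} is stated for finitely many indices while here $\w$ ranges over $S^{n-1}$ and $\nu$ over a noncompact set. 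I would handle this in the standard way: first restrict the outer minimization to $\|\nu\|_2=1$ using the degree-one positive homogeneity of the objective in $\nu$ (valid under the homogeneity of $\zeta_{\w,\nu}$ assumed in the intended application), then apply the theorem to finite $\varepsilon$-nets of the compact index sets $\{\|\w\|_2=1\}$ and $\{\|\nu\|_2=1\}$, and finally pass to the limit along refining nets using almost-sure continuity (separability) of both processes in $(\w,\nu)$ together with monotone convergence of the intersection/union events. This limiting passage, and the check that the $\sup$ and $\inf$ over nets converge to the $\max$ and $\min$ over the full sets, is where the care is required; the covariance comparison itself is elementary.
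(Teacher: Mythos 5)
Your proposal is correct and is essentially the paper's own route: the paper proves this lemma simply by deferring to Lemma 3.1 of \cite{Gordon88}, which is exactly the application of the comparison result (Theorem \ref{thm:Gordonmesh1}) that you carry out, with $\nu$ as the intersection index, $\w$ as the union index, and the covariance identity $E X_{\w,\nu}X_{\w',\nu'}-E Y_{\w,\nu}Y_{\w',\nu'}=(\w^T\w'-1)(\|\nu\|_2\|\nu'\|_2-\nu^T\nu')$ pinning the cross-index sign. Your two added observations --- that one must invoke the cross-index condition in its inequality form (the form Gordon actually proves, since the paper's statement of Theorem \ref{thm:Gordonmesh1} with all equalities is evidently a typo) and that the continuum index sets require a separability/net-limit passage --- are exactly the details the paper leaves implicit in its citation.
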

\begin{proof}
The proof is exactly the same as is the one of Lemma 3.1 in \cite{Gordon88}.
\end{proof}
Let $\zeta_{\w,\nu}=\epsilon_{5}^{(g)}\sqrt{n}\|\nu\|_2+f(\w)$ with $\epsilon_{5}^{(g)}>0$ being an arbitrarily small constant independent of $n$. We will first look at the right-hand side of the inequality in (\ref{eq:problemma}). The following is then the probability of interest
\begin{equation}
P(\min_{\nu\in R^n\setminus 0}\max_{\|\w\|_2=1}(\|\nu\|_2\g^T\w+\h^T\nu-\epsilon_{5}^{(g)}\sqrt{n}\|\nu\|_2-f(\w))\geq 0).\label{eq:probanal0}
\end{equation}
After pulling out $\|\nu\|_2$ one has
\begin{equation*}
\hspace{-.7in}P(\min_{\nu\in R^n\setminus 0}\max_{\|\w\|_2=1}(\|\nu\|_2\g^T\w+\h^T\nu-\epsilon_{5}^{(g)}\sqrt{n}\|\nu\|_2-f(\w))\geq 0)=P(\min_{\nu\in R^n\setminus 0}\max_{\|\w\|_2=1}(\|\nu\|_2(\g^T\w+\frac{\h^T\nu}{\|\nu\|_2}-\epsilon_{5}^{(g)}\sqrt{n}-\frac{f(\w)}{\|\nu\|_2}))\geq 0)
\end{equation*}
and then easily
\begin{equation*}
\hspace{-.5in}P(\min_{\nu\in R^n\setminus 0}\max_{\|\w\|_2=1}(\|\nu\|_2\g^T\w+\h^T\nu-\epsilon_{5}^{(g)}\sqrt{n}\|\nu\|_2-f(\w))\geq 0)=P(\min_{\nu\in R^n\setminus 0}\max_{\|\w\|_2=1}(\g^T\w+\frac{\h^T\nu}{\|\nu\|_2}-\epsilon_{5}^{(g)}\sqrt{n}-\frac{f(\w)}{\|\nu\|_2})\geq 0).
\end{equation*}
Replacing $\|\nu\|_2$ with a scaler $\frac{1}{\lambda_{\nu}}$ and solving the minimization over different $\nu$ with a fixed $\|\nu\|_2$ one obtains
\begin{equation}
\hspace{-.3in}P(\min_{\nu\in R^n\setminus 0}\max_{\|\w\|_2=1}(\|\nu\|_2\g^T\w+\h^T\nu-\epsilon_{5}^{(g)}\sqrt{n}\|\nu\|_2-f(\w))\geq 0)=P(\min_{\lambda_{\nu}\geq 0}\max_{\|\w\|_2=1}(\g^T\w-\lambda_{\nu}f(\w))\geq \|\h\|_2+\epsilon_{5}^{(g)}\sqrt{n}).\label{eq:probanal1}
\end{equation}
Since $\h$ is a vector of $m$ i.i.d. standard normal variables it is rather trivial that $P(\|\h\|_2<(1+\epsilon_{1}^{(m)})\sqrt{m})\geq 1-e^{-\epsilon_{2}^{(m)} m}$ where $\epsilon_{1}^{(m)}>0$ is an arbitrarily small constant and $\epsilon_{2}^{(m)}$ is a constant dependent on $\epsilon_{1}^{(m)}$ but independent of $n$. Then from (\ref{eq:probanal1}) one obtains
\begin{multline}
P(\min_{\lambda_{\nu}\geq 0}\max_{\|\w\|_2=1}(\g^T\w-\lambda_{\nu}f(\w))\geq \|\h\|_2+\epsilon_{5}^{(g)}\sqrt{n})\\\geq (1-e^{-\epsilon_{2}^{(m)} m})
P(\min_{\lambda_{\nu}\geq 0}\max_{\|\w\|_2=1}(\g^T\w-\lambda_{\nu}f(\w))\geq (1+\epsilon_{1}^{(m)})\sqrt{m}+\epsilon_{5}^{(g)}\sqrt{n}).\label{eq:probanal2}
\end{multline}

We now look at the left-hand side of the inequality in (\ref{eq:problemma}).
\begin{equation}
P(\min_{\nu\in R^n\setminus 0}\max_{\|\w\|_2=1}(-\nu^T A\w +\|\nu\|_2 g-\zeta_{\w,\nu})\geq 0)=
P(\min_{\nu\in R^n\setminus 0}\max_{\|\w\|_2=1}(-\nu^T A\w -f(\w)+\|\nu\|_2(g-\epsilon_{5}^{(g)}\sqrt{n}))\geq 0).\label{eq:leftprobanal0}
\end{equation}
Since $P(g\leq\epsilon_{5}^{(g)}\sqrt{n})< 1-e^{-\epsilon_{6}^{(g)} n}$ (where $\epsilon_{6}^{(g)}$ is, as all other $\epsilon$'s in this paper are, independent of $n$) from (\ref{eq:leftprobanal0}) we have
\begin{equation}
P(\min_{\nu\in R^n\setminus 0}\max_{\|\w\|_2=1}(-\nu^T A\w +\|\nu\|_2 g-\zeta_{\w,\nu})\geq 0)\leq
(1-e^{-\epsilon_{6}^{(g)} n})P(\min_{\nu\in R^n\setminus 0}\max_{\|\w\|_2=1}(-\nu^T A\w -f(\w))\geq 0)+e^{-\epsilon_{6}^{(g)} n}.\label{eq:leftprobanal1}
\end{equation}
Connecting (\ref{eq:problemma}), (\ref{eq:probanal0}), (\ref{eq:probanal1}), (\ref{eq:probanal2}), (\ref{eq:leftprobanal0}), and (\ref{eq:leftprobanal1}) we obtain
\begin{multline}
P(\min_{\nu\in R^n\setminus 0}\max_{\|\w\|_2=1}(-\nu^T A\w -f(\w))> 0)\geq\\
\frac{(1-e^{-\epsilon_{2}^{(m)} m})}{(1-e^{-\epsilon_{6}^{(g)} n})}
P(\min_{\lambda_{\nu}\geq 0}\max_{\|\w\|_2=1}(\g^T\w-\lambda_{\nu}f(\w))\geq (1+\epsilon_{1}^{(m)})\sqrt{m}+\epsilon_{5}^{(g)}\sqrt{n})
+\frac{e^{-\epsilon_{6}^{(g)} n}}{(1-e^{-\epsilon_{6}^{(g)} n})}.\label{eq:leftprobanal2}
\end{multline}
Let
\begin{equation}
\xi(f,\g)=\min_{\lambda_{\nu}\geq 0}\max_{\|\w\|_2=1}(\g^T\w-\lambda_{\nu}f(\w))\label{eq:xiDdef}
\end{equation}
and
\begin{equation}
\xi_D(f)=E\xi(f,\g).\label{eq:xidef}
\end{equation}
Using (\ref{eq:taudef1det4}) and (\ref{eq:xiDdef}), (\ref{eq:leftprobanal2}) becomes
\begin{equation}
P(-\tau(f,A)> 0)\geq
\frac{(1-e^{-\epsilon_{2}^{(m)} m})}{(1-e^{-\epsilon_{6}^{(g)} n})}
P(\xi(f,\g)\geq (1+\epsilon_{1}^{(m)})\sqrt{m}+\epsilon_{5}^{(g)}\sqrt{n})
+\frac{e^{-\epsilon_{6}^{(g)} n}}{(1-e^{-\epsilon_{6}^{(g)} n})}.\label{eq:leftprobanal3}
\end{equation}
Now we will make assumption that $\xi(f,\g)$ concentrates around $\xi_D(f)$ and that $\xi_D(f) \sim \sqrt{n}$, i.e.
\begin{equation}
P(|\xi(f,\g)-\xi_D(f)|\geq \epsilon_1^{(\xi)}\xi_D(f))\leq e^{-\epsilon_2^{(\xi)}\xi_D(f)}.\label{eq:xiconc}
\end{equation}
(This assumption can be avoided; however in the interest of maintaining as simple a presentation as possible we will state it).
Moreover, let
\begin{equation}
(1-\epsilon_1^{(\xi)})\xi_D(f)\geq (1+\epsilon_{1}^{(m)})\sqrt{m}+\epsilon_{5}^{(g)}\sqrt{n}.\label{eq:xiconc1}
\end{equation}
Then from (\ref{eq:leftprobanal3}) we have
\begin{equation}
P(-\tau(f,A)> 0)\geq
\frac{(1-e^{-\epsilon_{2}^{(m)} m})}{(1-e^{-\epsilon_{6}^{(g)} n})}(1-e^{-\epsilon_2^{(\xi)}\xi_D(f)})
+\frac{e^{-\epsilon_{6}^{(g)} n}}{(1-e^{-\epsilon_{6}^{(g)} n})}.\label{eq:leftprobanal4}
\end{equation}
Finally, if all assumptions we made indeed hold then
\begin{equation}
\lim_{n\rightarrow\infty}P(\tau(f,A)< 0) \geq \lim_{n\rightarrow\infty} \frac{(1-e^{-\epsilon_{2}^{(m)} m})}{(1-e^{-\epsilon_{6}^{(g)} n})}(1-e^{-\epsilon_2^{(\xi)}\xi_D(f)})
+\frac{e^{-\epsilon_{6}^{(g)} n}}{(1-e^{-\epsilon_{6}^{(g)} n})}=1.\label{eq:finaltau}
\end{equation}
In other words, if (\ref{eq:taudef1det3}) (or (\ref{eq:taudef1det31})), (\ref{eq:xiconc}), and (\ref{eq:xiconc1}) hold then for large $n$ one has with overwhelming probability that the random subspace of $\w$'s, $A\w$, will intersect set $S$ on the unit sphere.

We are now in position to state the following theorem which in a way complements Theorem \ref{thm:Gordonmesh}.

\begin{theorem}(Trapped in a mesh)
\label{thm:reversemesh} Let $m$ and $n$ be large and $m<n$ but proportional to $n$. Let $S$ be a subset of the unit Euclidean
sphere $S^{n-1}$ in $R^{n}$. Moreover, let $S$ be such that it can be characterized through a function $f(\w)$ in the following way
\begin{equation}
S=\{\w| ||\w||_2=1, f(\w)\leq 0\}.\label{eq:setSdefthm}
\end{equation}
Let $Y$ be a random
$(n-m)$-dimensional subspace of $R^{n}$, distributed uniformly in
the Grassmanian with respect to the Haar measure. For example, let
\begin{equation}
Y=\{\w| A\w=0\},\label{eq:setYdefthm}
\end{equation}
where $A$ is an $m\times n$ matrix of i.i.d. standard normals. Let $\g$ be an $m \times 1$ vector of i.i.d standard normals.
Further let
\begin{equation}
\xi_D(f)=E\min_{\lambda_{\nu}\geq 0}\max_{\|\w\|_2=1}(\g^T\w-\lambda_{\nu}f(\w)).\label{eq:xiDdefthm}
\end{equation}
Assume that $f(\w)$ is such that (\ref{eq:taudef1det3}) (or (\ref{eq:taudef1det31})) and (\ref{eq:xiconc}) hold.

1) Let $\epsilon_1$ and $\epsilon_2$ be arbitrarily small constants and let $m$ be such that
\begin{equation}
\xi_D(f)\geq (1+\epsilon_1)\sqrt{m}+\epsilon_2\sqrt{n}.\label{eq:xiconc1thm}
\end{equation}
Then
\begin{equation}
\lim_{n\rightarrow \infty}P(Y\cap S\neq 0)=1.
\label{eq:reversethmesh}
\end{equation}

2) On the other hand, let $m$ be such that
\begin{equation}
\xi_D(f) < \sqrt{m}-\frac{1}{4\sqrt{m}}.\label{eq:xiconc2thm}
\end{equation}
Then
\begin{equation}
\lim_{n\rightarrow \infty}P(Y\cap S = 0)=1.
\label{eq:directthmesh}
\end{equation}
\end{theorem}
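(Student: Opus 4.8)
The plan is to prove both parts by reducing the geometric event $\{Y\cap S\neq 0\}$ to the sign of the deterministic quantity $\tau(f,A)$ introduced in (\ref{eq:taudef1det1}). Recall that $\tau(f,A)=\min\{f(\w):A\w=0,\|\w\|_2=1\}$, so that $\tau(f,A)<0$ produces a unit vector $\w$ with $A\w=0$ and $f(\w)<0$, i.e.\ a point of $Y\cap S$ by the definition (\ref{eq:setSdefthm}); conversely, when $\tau(f,A)>0$ no feasible $\w$ can satisfy $f(\w)\leq 0$ and $Y$ misses $S$. Thus $\{Y\cap S\neq 0\}$ and $\{\tau(f,A)<0\}$ coincide up to the boundary event $\{\tau(f,A)=0\}$, which carries vanishing probability in the limit and can be absorbed into the $\epsilon$'s throughout.

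For \textbf{Part 1}, the substantive work has already been carried out in Section \ref{sec:probview}. First I would note that $P(Y\cap S\neq 0)\geq P(\tau(f,A)<0)$ by the dictionary above. The chain of inequalities running from the comparison Lemma (\ref{eq:problemma}), through the concentration assumption (\ref{eq:xiconc}) and the separation hypothesis (\ref{eq:xiconc1}), culminates in (\ref{eq:finaltau}), which already yields $\lim_{n\to\infty}P(\tau(f,A)<0)=1$. It therefore remains only to check that the hypothesis (\ref{eq:xiconc1thm}) of the theorem implies (\ref{eq:xiconc1}); since $\epsilon_1^{(\xi)}$ and the other auxiliary constants are arbitrarily small, this is a matter of choosing $\epsilon_1^{(m)}$ and $\epsilon_5^{(g)}$ below the slack afforded by $\epsilon_1$ and $\epsilon_2$. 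Combining this with the geometric reduction delivers (\ref{eq:reversethmesh}).

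For \textbf{Part 2}, the key observation is a weak-duality inequality linking $\xi_D(f)$ back to the Gaussian width $w_D(S)$ of Gordon's original theorem. Writing $w(f,\g)$ as the primal max-min $\max_{\|\w\|_2=1}\min_{\lambda\geq 0}(\g^T\w-\lambda f(\w))$ and $\xi(f,\g)$ as the dual min-max in (\ref{eq:xiDdef}), weak duality (recorded in (\ref{eq:widthdef1det5})) gives $w(f,\g)\leq\xi(f,\g)$ for every realization, hence $w_D(S)=w_D(f)=Ew(f,\g)\leq E\xi(f,\g)=\xi_D(f)$. Consequently the hypothesis (\ref{eq:xiconc2thm}) forces $w_D(S)<\sqrt{m}-\frac{1}{4\sqrt{m}}$, which is exactly the condition required by the escape-through-a-mesh Theorem \ref{thm:Gordonmesh}. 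Applying (\ref{eq:thmesh}) then bounds $P(Y\cap S=0)$ below by $1-3.5e^{-(\sqrt{m}-1/(4\sqrt{m})-w_D(S))^2/18}$, and since $m$ is proportional to $n$ while $\xi_D(f)\sim\sqrt{n}$, the gap $\sqrt{m}-\frac{1}{4\sqrt{m}}-w_D(S)$ is of order $\sqrt{n}$; the exponent is therefore of order $-n$ and the lower bound tends to $1$, establishing (\ref{eq:directthmesh}).

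The main obstacle is different in the two parts. In Part 1 essentially all the analytic effort is already spent in Section \ref{sec:probview}, so the only care needed is the bookkeeping of the small constants and the benign handling of the measure-zero boundary event $\{\tau(f,A)=0\}$. In Part 2 the crux is recognizing the weak-duality bound $w_D(S)\leq\xi_D(f)$, which is precisely what lets us fall back on Gordon's theorem rather than re-prove it; once that inequality is in hand the conclusion is routine. The one quantitative point deserving attention is the gap scaling: because (\ref{eq:xiconc2thm}) compares two quantities both of order $\sqrt{n}$, one must ensure the strict inequality reflects a genuine $\Theta(\sqrt{n})$ separation so that the exponent in (\ref{eq:thmesh}) is $\Theta(n)$ and the tail indeed vanishes.
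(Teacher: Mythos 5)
Your proposal is correct and follows essentially the same route as the paper's own proof: Part 1 is exactly the reduction of $\{Y\cap S\neq 0\}$ to $\{\tau(f,A)<0\}$ combined with the chain of inequalities already assembled in Section \ref{sec:probview} under the assumptions (\ref{eq:taudef1det3}), (\ref{eq:xiconc}), and (\ref{eq:xiconc1}), and Part 2 is exactly the weak-duality observation $w_D(S)=Ew(f,\g)\leq E\xi(f,\g)=\xi_D(f)$ recorded in (\ref{eq:widthdef1det5}), which lets (\ref{eq:xiconc2thm}) trigger Theorem \ref{thm:Gordonmesh}. Your closing caveat---that the strict inequality in (\ref{eq:xiconc2thm}) must reflect a $\Theta(\sqrt{n})$ gap so the exponent in (\ref{eq:thmesh}) is $\Theta(n)$---is a point the paper leaves implicit, and is worth retaining.
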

\begin{proof}
The first part follows from the discussion presented above. For the second part we first observe from (\ref{eq:widthdef1det5})
and (\ref{eq:xiDdef})
\begin{equation}
w(f,\g)\leq\min_{\lambda_{\nu}\geq 0}\max_{\|\w\|_2=1}(\g^T\w-\lambda_{\nu}f(\w))=\xi(f,\g).\label{eq:wxiproofthmrev}
\end{equation}
Then we have
\begin{equation}
w_D(S)=w_D(f)=Ew(f,\g)\leq E\min_{\lambda_{\nu}\geq 0}\max_{\|\w\|_2=1}(\g^T\w-\lambda_{\nu}f(\w))=E\xi(f,\g)=\xi_D(f).\label{eq:wxiproofthmrev1}
\end{equation}
A combination of (\ref{eq:wxiproofthmrev1}) and the condition given in (\ref{eq:xiconc2thm}) gives
\begin{equation}
w_D(S)=w_D(f)\leq\xi_D(f)< \sqrt{m}-\frac{1}{4\sqrt{m}}.\label{eq:wxiproofthmrev2}
\end{equation}
which is then enough to apply Theorem \ref{thm:Gordonmesh} and obtain (\ref{eq:directthmesh}).
\end{proof}

In essence the above theorem provides a characterization of sets $S$ for which one can determine in a sense an optimal maximal/minimal dimension of the missing/intersecting subspace $A\w$. Of course the result of the previous theorem will be useful as long as one is able to handle (compute) $\xi_D$. Also, one should note that there are numerous other ways that can be used to present the main results obtained above. We chose the way given in the above theorem in order to be as close as possible to the original formulation given in Theorem \ref{thm:Gordonmesh} and at the same time to maintain a presentation that would in a way hint what the main ideas behind the entire mechanism are. For example, among many alternative formulations, the following two are probably even more natural than the version presented in the above theorem. First, instead of trying to formulate results along the lines of Theorem \ref{thm:Gordonmesh} one can formulate probabilistic results based on (\ref{eq:leftprobanal4}) and the corresponding ones that can be obtained in analogous way for $w(f,\g)$. We skip this exercise but do mention that in the absence of Theorem \ref{thm:Gordonmesh} such a presentation would be our preferable one. Second, instead of relying on quantity $\xi_D$ one can rely on the original $w_D(S)$. Since this modification is relatively simple we will provide a brief sketch of it below. We also do mention that this modification will in the end produce results that are visually more similar to the ones given in the original formulation in Theorem \ref{thm:Gordonmesh}. However, to achieve a mere similarity one is in a way forced to remodel formulations given in Theorem \ref{thm:reversemesh} which in our view contain a bit of a flavor as to how the entire mechanism works. That way one ultimately produces a visual analogue to Theorem \ref{thm:Gordonmesh} but at the expense of losing a bit of the hint as to what the core of the presented concept is. Still, we do believe that it is convenient to have such a formulation handy and we therefore present it below.

\subsection{An alternative formulation} \label{sec:alter}

The inequality given in (\ref{eq:probanal2}) can be further extended in the following way
\begin{multline}
P(\min_{\lambda_{\nu}\geq 0}\max_{\|\w\|_2=1}(\g^T\w-\lambda_{\nu}f(\w))\geq \|\h\|_2+\epsilon_{5}^{(g)}\sqrt{n})\\\geq (1-e^{-\epsilon_{2}^{(m)} m})
P(\min_{\lambda_{\nu}\geq 0}\max_{\|\w\|_2=1}(\g^T\w-\lambda_{\nu}f(\w))\geq (1+\epsilon_{1}^{(m)})\sqrt{m}+\epsilon_{5}^{(g)}\sqrt{n})\\ \geq
(1-e^{-\epsilon_{2}^{(m)} m})P(\max_{\|\w\|_2=1}\min_{\lambda_{\nu}\geq 0}(\g^T\w-\lambda_{\nu}f(\w))\geq (1+\epsilon_{1}^{(m)})\sqrt{m}+\epsilon_{5}^{(g)}\sqrt{n})\\
=(1-e^{-\epsilon_{2}^{(m)} m})P(-\min_{\|\w\|_2=1}\max_{\lambda_{\nu}\geq 0}(-\g^T\w+\lambda_{\nu}f(\w))\geq (1+\epsilon_{1}^{(m)})\sqrt{m}+\epsilon_{5}^{(g)}\sqrt{n}).\label{eq:probanal2alt}
\end{multline}
Using (\ref{eq:widthdef1det3}) one then has
\begin{multline}
P(\min_{\lambda_{\nu}\geq 0}\max_{\|\w\|_2=1}(\g^T\w-\lambda_{\nu}f(\w))\geq \|\h\|_2+\epsilon_{5}^{(g)}\sqrt{n})\\\geq (1-e^{-\epsilon_{2}^{(m)} m})P(w(f,\g)\geq (1+\epsilon_{1}^{(m)})\sqrt{m}+\epsilon_{5}^{(g)}\sqrt{n}).\label{eq:probanal3alt}
\end{multline}
Connecting (\ref{eq:problemma}), (\ref{eq:probanal0}), (\ref{eq:probanal1}), (\ref{eq:probanal2}), (\ref{eq:leftprobanal0}), (\ref{eq:leftprobanal1}), and (\ref{eq:probanal3alt}) we arrive at the following analogue to (\ref{eq:leftprobanal2})
\begin{multline}
P(\min_{\nu\in R^n\setminus 0}\max_{\|\w\|_2=1}(-\nu^T A\w -f(\w))> 0)\geq\\
\frac{(1-e^{-\epsilon_{2}^{(m)} m})}{(1-e^{-\epsilon_{6}^{(g)} n})}
P(w(f,\g)\geq (1+\epsilon_{1}^{(m)})\sqrt{m}+\epsilon_{5}^{(g)}\sqrt{n})
+\frac{e^{-\epsilon_{6}^{(g)} n}}{(1-e^{-\epsilon_{6}^{(g)} n})},\label{eq:leftprobanal4alt}
\end{multline}
which after using (\ref{eq:taudef1det4}) becomes the following analogue to (\ref{eq:leftprobanal3})
\begin{equation}
P(-\tau(f,A)> 0)\geq
\frac{(1-e^{-\epsilon_{2}^{(m)} m})}{(1-e^{-\epsilon_{6}^{(g)} n})}
P(w(f,\g)\geq (1+\epsilon_{1}^{(m)})\sqrt{m}+\epsilon_{5}^{(g)}\sqrt{n})
+\frac{e^{-\epsilon_{6}^{(g)} n}}{(1-e^{-\epsilon_{6}^{(g)} n})}.\label{eq:leftprobanal5alt}
\end{equation}
As in the previous subsection, we will make assumption that $w(f,\g)$ concentrates around $w_D(S)=w_D(f)=Ew(f,\g)$ (which is a bit easier to insure than the concentration of $\xi(f,\g)$; a way for doing so can be deduced from \cite{Gordon88}) and that $w_D(S)=w_D(f) \sim \sqrt{n}$, i.e.
\begin{equation}
P(|w(f,\g)-w_D(f)|\geq \epsilon_1^{(w)}w_D(f))\leq e^{-\epsilon_2^{(w)}w_D(f)}.\label{eq:wconcalt}
\end{equation}
(The assumption can also be avoided; as mentioned above, one way to do so even for a fairly general $f$ is to follow the presentation of \cite{Gordon88}; however as was the case in the previous subsection, in the interest of maintaining as simple a presentation as possible we will simply assume (\ref{eq:wconcalt})).
Moreover, let
\begin{equation}
(1-\epsilon_1^{(w)})w_D(f)\geq (1+\epsilon_{1}^{(m)})\sqrt{m}+\epsilon_{5}^{(g)}\sqrt{n}.\label{eq:wconc1alt}
\end{equation}
Then from (\ref{eq:leftprobanal5alt}) we have
\begin{equation}
P(-\tau(f,A)> 0)\geq
\frac{(1-e^{-\epsilon_{2}^{(m)} m})}{(1-e^{-\epsilon_{6}^{(g)} n})}(1-e^{-\epsilon_2^{(w)}w_D(f)})
+\frac{e^{-\epsilon_{6}^{(g)} n}}{(1-e^{-\epsilon_{6}^{(g)} n})}.\label{eq:leftprobanal6alt}
\end{equation}
Finally, if all assumptions we made indeed hold then
\begin{equation}
\lim_{n\rightarrow\infty}P(\tau(f,A)< 0) \geq \lim_{n\rightarrow\infty} \frac{(1-e^{-\epsilon_{2}^{(m)} m})}{(1-e^{-\epsilon_{6}^{(g)} n})}(1-e^{-\epsilon_2^{(w)}w_D(f))})
+\frac{e^{-\epsilon_{6}^{(g)} n}}{(1-e^{-\epsilon_{6}^{(g)} n})}=1.\label{eq:finaltau}
\end{equation}
In other words, if (\ref{eq:taudef1det3}) (or (\ref{eq:taudef1det31})), (\ref{eq:wconcalt}), and (\ref{eq:wconc1alt}) hold then for large $n$ one has with overwhelming probability that the random subspace of $\w$'s, $A\w$, will intersect set $S$ on the unit sphere.

We are now in position to state the following theorem which is an alternative formulation of Theorem \ref{thm:reversemesh} and as Theorem \ref{thm:reversemesh} in a way complements Theorem \ref{thm:Gordonmesh}.

\begin{theorem}(Trapped in a mesh -- alternative)
\label{thm:reversemeshalt} Let $m$ and $n$ be large and $m<n$ but proportional to $n$. Let $S$ be a subset of the unit Euclidean
sphere $S^{n-1}$ in $R^{n}$. Moreover, let $S$ be such that it can be characterized through a function $f(\w)$ in the following way
\begin{equation}
S=\{\w| ||\w||_2=1, f(\w)\leq 0\}.\label{eq:setSdefthmalt}
\end{equation}
Let $Y$ be a random
$(n-m)$-dimensional subspace of $R^{n}$, distributed uniformly in
the Grassmanian with respect to the Haar measure. For example, let
\begin{equation}
Y=\{\w| A\w=0\},\label{eq:setYdefthmalt}
\end{equation}
where $A$ is an $m\times n$ matrix of i.i.d. standard normals. Let $\g$ be an $m \times 1$ vector of i.i.d standard normals.
Further let
\begin{equation}
w_D(S)=w_D(f)=E\max_{\|\w\|_2=1,f(\w)\leq 0}\g^T\w=E\max_{\w\in S}\g^T\w .\label{eq:wDdefthmalt}
\end{equation}
Assume that $f(\w)$ is such that (\ref{eq:taudef1det3}) (or (\ref{eq:taudef1det31})) and (\ref{eq:wconcalt}) hold.

1) Let $\epsilon_1$ and $\epsilon_2$ be arbitrarily small constants and let $m$ be such that
\begin{equation}
w_D(S)=w_D(f)\geq (1+\epsilon_1)\sqrt{m}+\epsilon_2\sqrt{n}.\label{eq:wconc1thmalt}
\end{equation}
Then
\begin{equation}
\lim_{n\rightarrow \infty}P(Y\cap S\neq 0)=1.
\label{eq:reversethmeshalt}
\end{equation}

2) On the other hand, let $m$ be such that
\begin{equation}
w_D(S)=w_D(f) < \sqrt{m}-\frac{1}{4\sqrt{m}}.\label{eq:wconc2thmalt}
\end{equation}
Then
\begin{equation}
\lim_{n\rightarrow \infty}P(Y\cap S = 0)=1.
\label{eq:directthmeshalt}
\end{equation}
\end{theorem}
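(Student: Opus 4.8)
The plan is to prove the two parts by entirely different routes, mirroring the proof of Theorem \ref{thm:reversemesh}: Part 1 (intersection) is read off from the probabilistic chain assembled just above the statement, while Part 2 (missing) is a one-line appeal to Gordon's escape-through-a-mesh result, Theorem \ref{thm:Gordonmesh}.

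For Part 1 I would begin with the geometric dictionary that makes the analytic quantity meaningful. By the definition of $\tau(f,A)$ in (\ref{eq:taudef1det1}) together with the descriptions of $S$ and $Y$ in (\ref{eq:setSdefthmalt})--(\ref{eq:setYdefthmalt}), the event $\{\tau(f,A)<0\}$ says precisely that there is a unit vector $\w$ with $A\w=\0$ and $f(\w)<0$, i.e. a vector in $Y\cap S$; hence $\{\tau(f,A)<0\}\subseteq\{Y\cap S\neq 0\}$ and it suffices to prove $P(\tau(f,A)<0)\to 1$. I would then invoke the inequality chain (\ref{eq:probanal2alt})--(\ref{eq:leftprobanal6alt}) verbatim: the Gaussian comparison inequality (\ref{eq:problemma}) (a consequence of Theorem \ref{thm:Gordonmesh1}) transfers the bound from the $A$-side to the $(\g,\h)$-side, weak duality together with the identification (\ref{eq:widthdef1det3}) lets one lower-bound $P(\xi(f,\g)\geq t)$ by $P(w(f,\g)\geq t)$ for any threshold $t$, and the elementary tail bound on $\|\h\|_2$ installs the deterministic threshold $(1+\epsilon_1^{(m)})\sqrt{m}+\epsilon_5^{(g)}\sqrt{n}$. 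Feeding in the concentration hypothesis (\ref{eq:wconcalt}) and the threshold inequality (\ref{eq:wconc1alt})---which the stated hypothesis (\ref{eq:wconc1thmalt}) implies once the auxiliary $\epsilon_1^{(w)}$ is absorbed into $\epsilon_1,\epsilon_2$---drives the right-hand side of (\ref{eq:leftprobanal6alt}) to $1$ as $n\to\infty$ (using $m\propto n$ and $w_D(f)\sim\sqrt{n}$), giving (\ref{eq:reversethmeshalt}).

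For Part 2 the work is already done by Gordon. Here $w_D(S)=w_D(f)$ is by (\ref{eq:wDdefthmalt}) literally the mean width $E\sup_{\w\in S}\g^T\w$ appearing in (\ref{eq:widthdef}), and the random $(n-m)$-dimensional nullspace $Y$ of (\ref{eq:setYdefthmalt}) is Haar-distributed in the Grassmannian, so the hypotheses of Theorem \ref{thm:Gordonmesh} hold verbatim. The assumed bound (\ref{eq:wconc2thmalt}), $w_D(S)<\sqrt{m}-\frac{1}{4\sqrt{m}}$, is exactly its width condition, so (\ref{eq:thmesh}) gives $P(Y\cap S=0)>1-3.5e^{-(\sqrt{m}-\frac{1}{4\sqrt{m}}-w_D(S))^2/18}$; letting $n\to\infty$ with the exponent growing yields (\ref{eq:directthmeshalt}). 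I would emphasize that, unlike in Theorem \ref{thm:reversemesh}, no intermediate passage through $w_D(S)\leq\xi_D(f)$ is needed, since the threshold is already phrased in terms of $w_D$ itself.

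The one genuinely load-bearing ingredient, and the step I expect to be the main obstacle, is the passage in Part 1 from the primal quantity $\tau(f,A)=\min_{\|\w\|_2=1}\max_\nu(f(\w)+\nu^TA\w)$ to the dual form that the comparison inequality produces. The lemma naturally yields an outer minimization over $\nu$ (the form $\min_\nu\max_\w$), whereas $\tau$ is defined as $\min_\w\max_\nu$; reconciling the two is exactly the sign-preserving minimax interchange (\ref{eq:taudef1det3}) (or its relaxation (\ref{eq:taudef1det31})), which is assumed in the statement but must be verified for any concrete $S$. By contrast the other inequalities are comfortably in the free direction---weak duality alone gives $w(f,\g)\leq\xi(f,\g)$, so bounding $P(\xi(f,\g)\geq t)$ below by $P(w(f,\g)\geq t)$ costs nothing---so once the interchange and the concentration (\ref{eq:wconcalt}) are granted, the proof proper is just the bookkeeping of reassembling the displayed inequalities and taking $n\to\infty$.
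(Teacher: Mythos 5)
Your proposal is correct and takes essentially the same route as the paper: Part 1 is precisely the inequality chain (\ref{eq:probanal2alt})--(\ref{eq:leftprobanal6alt}) of Section \ref{sec:alter} (the ``discussion presented above'' that the paper's one-line proof invokes), and Part 2 is the paper's direct appeal to Theorem \ref{thm:Gordonmesh} with $w_D(S)$ already being the Gaussian width, with no detour through $\xi_D(f)$. Your observation that the sign-preserving minimax interchange (\ref{eq:taudef1det3}) (or (\ref{eq:taudef1det31})) is the one genuinely load-bearing assumption, while the remaining steps are weak-duality and bookkeeping, matches the paper's treatment exactly.
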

\begin{proof}
The first part follows from the discussion presented above. The second part follows from Theorem \ref{thm:Gordonmesh} and parts of its proof given in \cite{Gordon88}.
\end{proof}

Visually speaking, Theorem \ref{thm:reversemeshalt} may seem as a more natural complement to Theorem \ref{thm:Gordonmesh}. It is probably even a bit simpler than the formulation given in Theorem \ref{thm:reversemesh}. On the other hand, formulation in Theorem \ref{thm:reversemesh} is still our preferable one. In a way, it contains a bit of a description of what really is the key to success of the entire mechanism. If one is to give only the second portion of these theorems we do believe that then Theorem \ref{thm:reversemeshalt} is a more suitable choice (of course, by no surprise that is exactly what was done in \cite{Gordon88}).

\subsection{Comments} \label{sec:comments}

As far as understanding of the above theorems goes, there are several comments that we believe are in place. Below are some of them.

\begin{enumerate}
\item As one compares the statements of Theorems \ref{thm:reversemesh} and \ref{thm:reversemeshalt} on one side and the statement of Theorem \ref{thm:Gordonmesh} on the other it is clear that the concentration results are stated differently. In fact, not only are they stated differently they are also way inferior in Theorems \ref{thm:reversemesh} and \ref{thm:reversemeshalt}. We did mention right after Theorem \ref{thm:Gordonmesh} that determining concentrating constants is to the best of our knowledge an open problem even in the original formulation given in Theorem \ref{thm:Gordonmesh}. The same remains true for both of our theorems. The difference though is that while constants in Theorem \ref{thm:Gordonmesh} are most likely not the best possible ones, they are, when compared to generic $\epsilon$'s (given in our theorems), much better. We do mention that in this paper our major concern was a general type of result that relates to relation between $w_D(S)$ ($\xi_D$) and $m$ rather than a precise concentration analysis. Still, it would be of great importance if one could provide a way more precise analysis and determine ultimate optimality of concentrating constants as well. Our $\epsilon$'s can relatively easily be translated into concrete numbers. However, determining their optimal values is actually what requires a more careful approach. In fact, quite possibly, one may end up obtaining the optimal constants which are very large (simply, because one would have to encompass the entire family of sets $S$; such is the standard set by the generality of some of results presented in Theorems \ref{thm:Gordonmesh}, \ref{thm:reversemesh}, and \ref{thm:reversemeshalt}!). This is partially the reason why we haven't stated any specific constants but rather left such a problem to be solved on individual case basis.
\item Another important question that may arise based on our presentation is which of many alternative formulations would be the best possible. Answering such a question seems rather hard. Our experience is that when the mechanism works then typically everything (every quantity of interest) concentrates and if one is then fine with ignoring specifics of concentrations then essentially all formulations are fine.
\item The results presented above will not hold for all sets $S$. The question then remains can one determine the class of sets $S$ for which they will hold (such a subclass is determined by the two above theorems).
\item How hard is for a function to actually satisfy the assumptions that we have made? This is again a very generic question and it seems that it is better to form a class of functions for which they do hold, instead of trying to exclude those for which they do not.
\item How limiting/general are our descriptions of set $S$? In reality the description of set $S$ that we assumed is rather simple. We basically assumed that the entire set can be characterized through a functional inequality. However, our assumption was made mostly for the exposition purposes. The entire mechanism would go through as well even if set $S$ was characterized by an arbitrary number, say $L$, of functional inequalities, i.e., $f^{(l)}(\w)\leq 0,1,2\dots,L$.

\end{enumerate}

\section{Discussion -- how all of it actually works} \label{sec:discussion}

While the results presented in the previous section may seem a bit dry they are actually quite powerful. However, to really get a feeling how powerful they are one would have to convince himself/herself that there are scenarios when they can be used. While conceptually we discovered an array of sets $S$ for which subspace dimension results of Theorem \ref{thm:Gordonmesh} eventually through Theorems \ref{thm:reversemesh} and \ref{thm:reversemeshalt} become optimal we believe that it is easier to grasp the concept on small examples. Of course that is the reason why in the first part of the paper we briefly presented a problem that we were able to attack to full optimality using the mechanisms formulated in Theorems \ref{thm:reversemesh} and \ref{thm:reversemeshalt}. Below we will briefly sketch how the results presented in Section \ref{sec:back} actually fit into the context of the machinery presented in the previous section. Before doing so we just provide a small example that shows how the entire machinery can be modified a bit if function $f(\w)$ is of a special type.

\subsection{Homogeneous $f(\w)$} \label{sec:homf}

When function $f(\w)$ is homogeneous one can actually change a bit the presentation described above. In fact the presentation can be changed in many other scenarios as well; however we selected this one just to give a flavor as to what are possible options. Another reason is that sketching how the results given in Section \ref{sec:back} fit into what was presented above will be a bit easier. Now, let $f(\w)$ be a homogeneous function. Namely, let $f(\w)$ be such that
\begin{equation}
f(a\w)=a^df(\w),\label{eq:homf1}
\end{equation}
for any $a>0$ and a $d>0$. Then we say that function $f(\w)$ is positive homogeneous of degree $d$. Then for all practical purposes one can redefine $\tau(f,A)$ from (\ref{eq:taudef1det1}) in the following way
\begin{eqnarray}
\tau^{(h)}(f,A) =  \min & & f(\w) \nonumber \\
\mbox{subject to} & &  A\w=0 \nonumber \\
& & \|\w\|_2\leq 1.\label{eq:taudef1det1hom}
\end{eqnarray}
Proceeding then as in Section \ref{sec:detview} one can write
\begin{equation}
\tau^{(h)}(f,A) =  \min_{||\w||_2\leq 1}\max_{\nu} f(\w)+\nu^TA\w,\label{eq:taudef1det2hom}
\end{equation}
and assume that the structure of set $S$ is determined by a function $f(\w)$ for which it also holds
\begin{eqnarray}
\tau^{(h)}(f,A) & = & \min_{||\w||_2\leq 1}\max_{\nu} f(\w)+\nu^TA\w\nonumber \\
& = &  \max_{\nu} \min_{||\w||_2\leq 1}f(\w)+\nu^TA\w.\label{eq:taudef1det3hom}
\end{eqnarray}
If (as in Section \ref{sec:detview}) one instead focuses only on the sign of $\tau^{(h)}(f,A)$ one can relax a bit requirement (\ref{eq:taudef1det3hom}) to
\begin{eqnarray}
\mbox{sign}(\tau^{(h)}(f,A)) & = & \mbox{sign}(\min_{||\w||_2\leq 1}\max_{\nu} f(\w)+\nu^TA\w)\nonumber \\
& = &  \mbox{sign}(\max_{\nu} \min_{||\w||_2\leq 1}f(\w)+\nu^TA\w).\label{eq:taudef1det31hom}
\end{eqnarray}
After rearranging (\ref{eq:taudef1det3hom}) a bit we have
\begin{equation}
-\tau^{(h)}(f,A) = \min_{\nu} \max_{||\w||_2\leq 1} -f(\w)-\nu^TA\w,\label{eq:taudef1det4hom}
\end{equation}
and after rearranging (\ref{eq:taudef1det31hom}) a bit we have
\begin{equation}
-\mbox{sign}(\tau^{(h)}(f,A)) = \mbox{sign}(\min_{\nu} \max_{||\w||_2\leq 1} -f(\w)-\nu^TA\w).\label{eq:taudef1det41hom}
\end{equation}
Now one can repeat all the derivations from Section \ref{sec:probview} with $\tau^{(h)}(f,A)$ instead of $\tau(f,A)$. As a final result one would wind up with the theorems that are exactly the same as Theorems \ref{thm:reversemesh} and \ref{thm:reversemeshalt}. The only difference is that the assumptions on $f(\w)$ would be those from (\ref{eq:taudef1det3hom}) (or (\ref{eq:taudef1det31hom})) instead of those from (\ref{eq:taudef1det3}) (or (\ref{eq:taudef1det31})). This is a bit convenient since it essentially boils down to a duality over a convex set. Of course, everything we mentioned in this subsection remains true for any function for which the sign of $\tau(f,A)$ from (\ref{eq:taudef1det1}) does not change if one relaxes the sphere condition to the ball condition.

\subsection{An example of set $S$ where everything works} \label{sec:exmpl}

In this subsection we sketch how the results presented in Section \ref{sec:back} fit into the framework given in Section \ref{sec:probview}. We recall first that the problem that we were interested in in Section \ref{sec:back} is essentially the following: for a given $n$-dimensional $k$-sparse vector $\tilde{\x}$ (with say last $n-k$ components being zero) can one estimate the dimension of matrices $A$ in (\ref{eq:system}) such that the solution of (\ref{eq:l1}) is actually $k$-sparse. In fact let us be a bit more specific. Let us look at a $k$-sparse vector $\tilde{\x}$ (given the statistical structure  that will be later on assumed on $A$, one can without a loss of generality, set $\tilde{\x}_i=0, i=k+1,k+2,\dots,n$). Now, the question of interest is: given $A$ and $A\tilde{\x}$ (where $A$ is an $m\times n$ matrix and is typically called the measurement matrix) can one find $\x$ such that
\begin{equation}
A\x=A\tilde{\x}.\label{eq:systemdisc}
\end{equation}
To make sure that we maintain consistency we do emphasize that $A\tilde{\x}$ in (\ref{eq:systemdisc}) is what $\y$ in Section \ref{sec:back} is (in other words, although we did not state it anywhere in Section \ref{sec:back}, $y$ was essentially implied to be constructed as the product of matrix $A$ and a $k$-sparse vector $\x$). As we have mentioned in Section \ref{sec:back} a popular way to attack the above problem is to solve (\ref{eq:l1}), i.e. the following optimization problem
\begin{eqnarray}
\min_{\x} & & \|\x\|_1\nonumber \\
\mbox{subject to} & & A\x =  A\tilde{\x}.\label{eq:l1disc}
\end{eqnarray}
While the original problem (\ref{eq:systemdisc}) is NP-hard in the worst case, the optimization problem in (\ref{eq:l1disc}) is clearly solvable in polynomial time. Let $\hat{\x}$ be the solution of (\ref{eq:l1disc}). The question then is how often (if ever) $\hat{\x}=\tilde{\x}$. The line of thought first goes through the recognition that $\hat{\x}$ will be $k$-sparse $\tilde{\x}$ only if there is no $\w$ such that $\hat{\x}=\x+\w$, where $\w$ is in the null space of $A$ and satisfies (see, e.g. \cite{StojnicICASSP09,StojnicCSetam09,StojnicUpper10})
\begin{equation}
-\sum_{i=1}^{k}\w_i\geq \sum_{i=k+1}^{n}\|\w_i\|.
\end{equation}
If one then defines set $S$ on the unit sphere $S^{(n-1)}$ based on this parametrization of non-favorable $\w$'s one effectively obtains
\begin{equation}
S=\{\w| \sum_{i=1}^{k}\w_i+\sum_{i=k+1}^{n}\|\w_i\|\leq 0, \|\w\|_2=1\}.\label{eq:setSdisc}
\end{equation}
If one then defines $f(\w)$ as
\begin{equation}
f(\w)=\sum_{i=1}^{k}\w_i+\sum_{i=k+1}^{n}\|\w_i\|,\label{eq:fdisc}
\end{equation}
then clearly we have
\begin{equation}
S=\{\w| f(\w)\leq 0, \|\w\|_2=1\},\label{eq:setSdisc1}
\end{equation}
which fits into the description of $S$ given in (\ref{eq:setSdef}). Moreover, $S$ and ultimately $f$ will indeed satisfy all assumptions that we have made. Namely, $f(\w)$ from (\ref{eq:fdisc}) is positive homogeneous of degree $1$ and duality in (\ref{eq:taudef1det3hom}) and (\ref{eq:taudef1det31hom}) will easily hold. Also, let (as in Theorems \ref{thm:reversemesh} and \ref{thm:reversemeshalt})
\begin{equation}
Y=\{\w| \w\in R^n, A\w=0\}.\label{eq:setYdefdisc}
\end{equation}
Now, if one look at all $\w$'s from the null-space of $A$, i.e. at set $Y$, one can then connect the intersection of sets $Y$ and $S$ with $\tilde{\x}$ being equal or not to $\hat{\x}$. Namely, if $Y\cap S=0$ then $\tilde{\x}=\hat{\x}$ and if $Y\cap S\neq 0$ then there will be an $\hat{\x}$ such that $\hat{\x}_i=0,i=k+1,k+2,\dots,n$ and $\tilde{\x}\neq\hat{\x}$. Now, if one views the problem in a random context with matrix $A$ being an $m\times n$ matrix of i.i.d. standard normals, then one can for a given ratio $\frac{k}{n}$ determine the critical value of ratio $\frac{m}{n}$, $\frac{m_w}{n}=\frac{w_D(S)^2}{n}$, so that for $\frac{m}{n}>\frac{m_w}{n}=\frac{w_D(S)^2}{n}$ with overwhelming probability $\hat{\x}=\tilde{\x}$ for all $\tilde{\x}$ such that $\tilde{\x}_i=0,i=k+1,k+2,\dots,n$. On the other hand, for $\frac{m}{n}<\frac{m_w}{n}=\frac{w_D(S)^2}{n}$ with overwhelming probability there is an $\tilde{\x}$ such that $\tilde{\x}_i=0,i=k+1,k+2,\dots,n$ and $\hat{\x}\neq\tilde{\x}$ (in fact to be more in alignment with our theorems, instead of with overwhelming probability we should say with a probability that goes to one as $n\rightarrow\infty$).

Of course, what we presented above is just how critical $m_w$ can be connected to $w_D(S)$. In a way that solves only a half of the problem. The second half is to actually determine $w_D(S)$. That relates to question 1) that we mentioned in the short discussion after Theorem \ref{thm:Gordonmesh}. On the other hand, our main concern in here is question 2) from the very same discussion and along the same lines details related to handling $w_D(S)$ go beyond the scope of this paper. However, we do mention in passing that computing $w_D(S)$ was one of the problems of interest in \cite{StojnicCSetam09,StojnicUpper10} and the results obtained there are actually those presented in Theorem \ref{thm:thmweakthr}.

Also, what we presented in this section is a simple way how one can interpret the entire mechanism from previous sections when it comes to a particular set $S$. The interpretation given above is related to a rather simple set $S$. A more complicated version of $S$ where everything also works can be found in e.g. \cite{StojnicCSetamBlock09,StojnicUpperBlock10}.

\section{Conclusion}
\label{sec:conc}

In this paper we revisited a couple of classic probability results from \cite{Gordon88}. These results relate to the geometry of the intersection of random subspaces and subsets of the unit sphere in $R^n$ and properties of Gaussian processes. Namely, in \cite{Gordon88}, the likelihood of having random subspace of $R^n$ of dimension $n-m$ intersect a given set $S$ on the unit sphere was connected to a quantity describing set $S$ called the Gaussian width. Moreover, it was shown that $m$ can go (roughly speaking) as low as the squared gaussian width without having any significant likelihood of the random $n-m$-dimensional subspace intersecting set $S$. In this paper we provided a characterization of a class of sets $S$ for which if $m$ goes lower that the squared gaussian width of $S$ then it is highly likely that $n-m$-dimensional will intersect set $S$. In a way we provided a partial complement to the results of \cite{Gordon88}.

Also, to give a bit more flavor to a rather dry presentation of high dimensional geometry we gave a fairly detailed presentation of how the results that we created can in fact be utilized. We chose an example that deals with solving under-determined systems of linear equations with sparse solutions. It turns out that when the systems are random and gaussian the success of a technique called $\ell_1$-optimization when used to solve them can be connected to the problem of random subspaces intersecting given set $S$ on the unit sphere. We described how such a connection can be established and then provided a sketch as to how the main results of this paper actually work when such a connection is established.

While we presented only one specific example to give a flavor how everything practically works, the overall methodology is way more powerful. There are various other instances where we were able to successfully employ majority of the ideas presented here. Moreover, the mechanisms presented here are in fact a subcase of a much larger concept. In this paper though our focus were particular geometric results established in \cite{Gordon88} and how one can complement them. On the other hand, when viewed outside the scope of the results presented in \cite{Gordon88} our methodology admits consideration of substantially more general concepts. This goes way beyond the particular problems that we considered in this paper and we will present it elsewhere.

Finally, it is quite likely that Gordon's original results that we revisited here were only a tool towards much higher mathematical goals. Among them would immediately be a better version of the Dvoretzky theorem already established in Gordon's original work. Our results can then be used to complement all of such results where Gordon's estimates turned out to be of use. Of course, revisiting all of these takes a substantial effort that goes way beyond what we planned to present here. Here we only focused at the heart of the idea, which essentially boils down to simple reuse (with a little bit of our own recognition that duality theory can be quite powerful) of the Gordon's mechanism to prove its own optimality.

\begin{singlespace}
\bibliographystyle{plain}
\bibliography{GorEx}
\end{singlespace}

\end{document}